\providecommand{\ignore}[1]{}
\newif\ifcmnt
    \providecommand{\aucmnt}[1]{#1}
    \providecommand{\aucmnt}[1]{}
\newcommand{\Prob}{\mathbb{P}}
\newcommand{\Exp}{\mathbb{E}}
\newcommand{\Probv}{\mathbb{P}}
\newcommand{\TV}{\mathrm{TV}}
\newcommand{\UPE}{\mathrm{UPE}}
\newcommand{\knuth}[1]{\left\llbracket #1 \right\rrbracket}
\newcommand{\Rng}{\mathrm{Rng}}
\newcommand{\Sfnt}[1]{\mathbf{#1}} 
\newcommand{\Pfnt}[1]{\textsf{#1}} 
\newcommand{\cC}{\mathcal{C}}
\newcommand{\cD}{\mathcal{D}}
\newcommand{\cE}{\mathcal{E}}
\newcommand{\cH}{\mathcal{H}}
\newcommand{\cN}{\mathcal{N}}
\newcommand{\cP}{\mathcal{P}}
\newcommand{\cT}{\mathcal{T}}
\def\one{{\mathchoice {\rm 1\mskip-4mu l} {\rm 1\mskip-4mu l} {\rm
1\mskip-4.5mu l} {\rm 1\mskip-5mu l}}}
\newtheorem{theorem}{Theorem}
\newtheorem*{theorem*}{Theorem}
\newtheorem{lemma}[theorem]{Lemma}
\newtheorem*{lemma*}{Lemma}
\newtheorem{corollary}[theorem]{Corollary}
\newtheorem{definition}[theorem]{Definition}
\newtheorem*{definition*}{Definition}
\begin{document}

\title{Certifying Quantum Randomness by Probability Estimation}

\author{Yanbao Zhang}\thanks{ 
                              yanbaoz@gmail.com}
\affiliation{NTT Basic Research Laboratories and NTT Research Center for Theoretical Quantum Physics, 
NTT Corporation, 3-1 Morinosato-Wakamiya, Atsugi, Kanagawa 243-0198, Japan}
\author{Emanuel Knill}
\affiliation{National Institute of Standards and Technology, Boulder, Colorado 80305, USA}
\affiliation{Center for Theory of Quantum Matter, University of Colorado, Boulder, Colorado 80309, USA}
\author{Peter Bierhorst}
\affiliation{National Institute of Standards and Technology, Boulder, Colorado 80305, USA}
\affiliation{Mathematics Department, University of New Orleans, New Orleans, Louisiana 70148, USA}

\begin{abstract}
  We introduce probability estimation, a broadly applicable framework to certify randomness in a finite sequence of measurement results without assuming that these results are independent and identically distributed. Probability estimation can take advantage of verifiable physical constraints, and the certification is with respect to classical side information.  Examples include randomness from single-photon measurements and device-independent randomness from Bell tests.  Advantages of probability estimation include adaptability to changing experimental conditions, 
  unproblematic early stopping when goals are achieved, optimal randomness rates, applicability to Bell tests with small violations, and unsurpassed finite-data efficiency. We greatly reduce latencies for producing random bits and formulate an associated rate-tradeoff problem of independent interest. We also show that the latency is determined by 
 an information-theoretic measure of nonlocality rather than the Bell violation. 
\end{abstract}

\maketitle
 
Randomness is a key enabling resource for computation and communication.
Besides being required for Monte-Carlo simulations and statistical
sampling, private random bits are needed for initiating authenticated
connections and establishing shared keys, both common tasks for
browsers, servers and other online entities~\cite{paar2010}.  Public
random bits from ``randomness beacons'' have applications to fair
resource sharing~\cite{fischer:qc2011a} and can seed private
randomness sources based on quantum mechanics~\cite{pironio:2013}.  
Common requirements for random bits are that they are unpredictable 
to all before they are generated, and private to the users 
before they are published. 

Quantum mechanics provides natural opportunities for generating
randomness. The best known example involves measuring a two-level
system that is in an equal superposition of its two levels.  A
disadvantage of such schemes is that they require trust in 
the measurement apparatus, and undiagnosed failures are
always a possibility. This disadvantage is overcome by a loophole-free
Bell test~\cite{colbeck:2007,colbeck:2011}, which can generate output
whose randomness can be certified solely by statistical tests of
setting and outcome frequencies. The devices preparing the quantum
states and performing the measurements may come from an untrusted source.  
This strategy for certified randomness generation is known as 
device-independent randomness generation (DIRG). 

Loophole-free Bell tests have been realized with nitrogen-vacancy (NV)
centers~\cite{hensen:2015}, with atoms~\cite{rosenfeld:qc2016a} and
with photons~\cite{giustina:2015, shalm:2015}, enabling the
possibility of full experimental implementations of DIRG. However, for
NV centers and atoms, the rate of trials is too low, and for photons,
the violation per trial is too small. As a result, previously
available DIRG protocols~\cite{pironio:2010, vazirani:2012,
  pironio:2013, fehr:2013, miller_c:qc2014a, miller_c:qc2014b,
  chung:2014, coudron:2014, arnon-friedman:2018, NietoSilleras:qc2016}
are not ready for implementation with current loophole-free Bell
tests.  These protocols do not achieve good finite-data 
efficiency and therefore require an impractical number of trials.  Experimental techniques will
improve, but for many applications of randomness generation, including 
randomness beacons and key generation, it is desirable to achieve 
finite-data efficiency that is as high as possible, since these applications 
often require short blocks of fresh random bits with minimum delay or latency. 

Excellent finite-data efficiency was achieved by a method that we described 
and implemented in Refs.~\cite{bierhorst:qc2017a,bierhorst:qc2018a}, which 
reduced the time required for generating $1024$ low-error random bits with 
respect to classical side information from hours to minutes for a
state-of-the-art photonic loophole-free Bell test.  The method in 
Refs.~\cite{bierhorst:qc2017a,bierhorst:qc2018a} is based on the 
prediction-based ratio (PBR) analysis~\cite{zhang:2011} for hypothesis 
tests of local realism. Specifically, in Refs.~\cite{bierhorst:qc2017a,bierhorst:qc2018a} 
we established a connection between the PBR-based $p$-value and the amount 
of randomness certified against classical side information.
The basis for success of the method of Refs.~\cite{bierhorst:qc2017a,bierhorst:qc2018a} 
motivates our development of probability estimation  for
randomness certification, with better finite-data efficiency and with
broader applications.  

In the probability estimation framework, the amount of certified randomness is \emph{directly} estimated 
without relying on hypothesis tests of local realism. To certify randomness, we 
first obtain a bound on the conditional probability of the observed outcomes 
given the chosen settings, valid for all classical side information. Then we  
show how to obtain conditional entropy estimates from this bound to quantify the 
number of extractable random bits~\cite{koenig:qc2009a}.  By focusing on
data-dependent probability estimates, we are able to take advantage of powerful 
statistical techniques to obtain the desired bound.  The statistical techniques 
are based on test supermartingales~\cite{shafer:qc2009a} and Markov's bounds. 
Probability estimation inherits several features of the theory of test supermartingales. 
For example, probability estimation has no independence or stationarity requirement on the 
probability distribution of trial results. Also, probability estimation supports stopping the 
experiment early, as soon as the randomness goal is achieved. 

Probability estimation is broadly applicable. In particular it is not limited to
device-independent scenarios and can be applied to traditional
randomness generation with quantum devices. 
Such applications are enabled by the notion of models,
which are sets of probability distributions that capture verified, 
physical constraints on device behavior. In the case of Bell
tests, these constraints include the familiar non-signaling 
conditions~\cite{PRBox,barrett:2005}. In the case of two-level 
systems such as polarized photons, the constraints can capture 
that measurement angles are within a known range, for example.

In this paper, we first describe the technical features 
of probability estimation and the main results that enable its practical use.  We propose a general
information-theoretic rate-tradeoff problem that closely relates to
finite-data efficiency.  We then show how the general theoretical
  concepts are instantiated in experimentally relevant examples
  involving Bell-test configurations. We demonstrate advantages of probability estimation
such as its optimal asymptotic randomness rates and show
large improvements in finite-data efficiency, which corresponds to 
great reductions in latency. 

\emph{Theory.}  Consider an experiment with ``inputs'' $\Sfnt{Z}$ and
``outputs'' $\Sfnt{C}$. The inputs normally consist of the random
choices made for measurement settings but may include choices of state
preparations such as in the protocols of Refs.~\cite{Lunghi:2015,
  VanHimbeeck:2017}.  The outputs consist of the corresponding
measurement outcomes. In the cases of interest, the inputs and outputs
are obtained in a sequence of $n$ time-ordered trials, where the $i$'th 
trial has input $Z_{i}$ and output $C_{i}$, and $\Sfnt{Z}=(Z_{i})_{i=1}^{n}$ 
and $\Sfnt{C}=(C_{i})_{i=1}^{n}$. We assume that $Z_{i}$ and $C_{i}$ are
countable-valued. We refer to the trial inputs and outputs collectively 
as the trial ``results'', and to the trials preceding 
the upcoming one as the ``past''.  The party with respect to which the randomness is 
intended to be unpredictable is represented by an external classical 
system, whose initial state before the experiment may be correlated 
with the devices used.  The classical system carries the side information 
$E$, which is assumed to be countable-valued.  After the experiment, 
the joint of $\Sfnt{Z}$, $\Sfnt{C}$ and $E$ is described by a probability distribution $\mu$. 
The upper-case symbols introduced in this paragraph are treated as random variables. 
As is conventional, their values are denoted by the corresponding lower-case 
symbols. 

The amount of extractable uniform randomness in $\Sfnt{C}$ conditional on both $\Sfnt{Z}$ and $E$ is
quantified by the (classical) smooth conditional min-entropy
$H_{\min}^{\epsilon}(\Sfnt{C}|\Sfnt{Z}E)_{\mu}$ where $\epsilon$ is
the ``error bound'' (or ``smoothness'') and $\mu$ is the joint distribution 
of $\Sfnt{Z}$, $\Sfnt{C}$ and $E$.  One way to define the smooth
conditional min-entropy is with the conditional guessing probability
$P_{\mathrm{guess}}(\Sfnt{C}|\Sfnt{Z}E)_{\mu}$ defined as the average
over values $\Sfnt{z}$ and $e$ of the maximum conditional probability
$\max_{\Sfnt{c}}\mu(\Sfnt{c}|\Sfnt{z}e)$.  The $\epsilon$-smooth
conditional min-entropy
$H_{\min}^{\epsilon}(\Sfnt{C}|\Sfnt{Z}E)_{\mu}$ is the greatest lower
bound of $-\log_{2}P_{\mathrm{guess}}(\Sfnt{C}|\Sfnt{Z}E)_{\mu'}$ for
all distributions $\mu'$ within total-variation distance $\epsilon$ of
$\mu$. Our goal is to obtain lower bounds on
$H_{\min}^{\epsilon}(\Sfnt{C}|\Sfnt{Z}E)_{\mu}$ with probability estimation.

The application of probability estimation requires a notion of models. A model $\cH$ 
for an experiment is defined as the set of all probability distributions
of $\Sfnt{Z}$ and $\Sfnt{C}$ achievable in the experiment conditionally on
 values $e$ of $E$.  If a joint distribution $\mu$ of $\Sfnt{Z}$, 
 $\Sfnt{C}$ and $E$ satisfies that for all $e$, the conditional distributions 
 $\mu(\Sfnt{C}\Sfnt{Z}|e)$, considered as distributions of $\Sfnt{Z}$ and $\Sfnt{C}$, 
 are in $\cH$, we say that the distribution $\mu$ satisfies the model $\cH$. 

To apply probability estimation to an experiment consisting of $n$ time-ordered trials, we construct 
the model $\cH$ for the experiment as a chain of models $\cC_{i}$ for each individual 
trial $i$ in the experiment. The trial model $\cC_{i}$ is defined as the set of 
all probability distributions of trial results $C_{i}Z_{i}$ achievable at 
the $i$'th trial conditionally on both the past trial results and the side 
information $E$. For example, for Bell tests, $\cC_{i}$ may be
the set of non-signaling distributions with uniformly random
inputs.  Let $\Sfnt{z}_{<i}=(z_{j})_{j=1}^{i-1}$ and
$\Sfnt{c}_{<i}=(c_{j})_{j=1}^{i-1}$ be the results before the
$i$'th trial. The sequences $\Sfnt{z}_{\leq i}$ and $\Sfnt{c}_{\leq
  i}$ are defined similarly.  The chained model $\cH$ consists of all 
  conditional distributions $\mu(\Sfnt{C}\Sfnt{Z}|e)$ satisfying the following 
  two conditions. First, at each trial $i$ the conditional distributions 
  $\mu(C_{i}Z_{i}|\Sfnt{c}_{<i}\Sfnt{z}_{<i}e)$ for all $\Sfnt{c}_{<i}$, 
  $\Sfnt{z}_{<i}$ and $e$ are in the trial model $\cC_{i}$.  
Second, at each trial $i$ the input $Z_i$ is independent of the past outputs $\Sfnt{C}_{<i}$ 
given $E$ and the past inputs $\Sfnt{Z}_{< i}$. The second condition prevents leaking information 
 about the past outputs through the future inputs, which is necessary for certifying
 randomness in the outputs $\Sfnt{C}$ conditional on both the inputs $\Sfnt{Z}$ and 
 the side information $E$.  In the common situation where the inputs are
chosen independently with distributions known before the experiment,
 the second condition is always satisfied.  

Since the model $\cH$ consists of all conditional distributions $\mu(\Sfnt{C}\Sfnt{Z}|e)$
regardless of the value $e$, the analyses in the next paragraph apply to the worst-case 
conditional distribution over $e$. To simplify notation we normally write  
the distribution $\mu(\Sfnt{C}\Sfnt{Z}|e)$ conditional on $e$ as $\mu_e(\Sfnt{C}\Sfnt{Z})$,
abbreviated as $\mu_{e}$. 

To estimate the conditional probability $\mu_e(\Sfnt{c}|\Sfnt{z})$, 
we design trial-wise probability estimation factors (PEFs) 
and multiply them. Consider a generic trial with trial model $\cC$, where 
for generic trials, we omit the trial index. Let $\beta>0$.  A PEF with 
power $\beta$ for $\cC$ is a function $F:cz\mapsto F(cz)\geq 0$ such that for all
$\sigma\in\cC$, $\Exp_{\sigma} \big(F(CZ)\sigma(C|Z)^{\beta}\big)\leq 1$,
where $\Exp$ denotes the expectation functional. 
Note that $F(cz)=1$ for all $cz$ defines a valid PEF with each positive power. 
For each $i$, let $F_{i}$ be a PEF with power $\beta$ for the $i$'th trial, 
where the PEF can be chosen adaptively based on the past results $\Sfnt{c}_{<i}\Sfnt{z}_{<i}$. 
Other information from the past may also be used, see
Ref.~\cite{knill:2017}. Let $T_{0}=1$ and $T_{i}=\prod_{j=1}^{i}F_{j}(C_{j}Z_{j})$. 
 The final value $T_{n}$ of the running product $T_i$, where $n$ is the total 
number of trials in the experiment, determines the probability estimate. 
 Specifically, for each value $e$ of $E$, each $\mu_e$ in the chained model 
 $\cH$, and $\epsilon>0$, 
 we have 
\begin{equation}
  \Prob_{\mu_e}\big(\mu_{e}(\Sfnt{C}|\Sfnt{Z}) \ge U(\Sfnt{C}\Sfnt{Z})\big)\leq \epsilon,
  \label{eq:prob_est}
\end{equation}
 where $\Prob_{\mu_e}$ denotes the probability according to the distribution $\mu_e$
and $U(\Sfnt{C}\Sfnt{Z})=(\epsilon T_n)^{-1/\beta}$. The proof of Eq.~\eqref{eq:prob_est} 
is given in Appendix~\ref{sec:PE_and_PEFs}. 
The meaning of Eq.~\eqref{eq:prob_est} is as follows: For each $e$ and each $\mu_e\in\cH$, 
the probability that $\Sfnt{C}$ and $\Sfnt{Z}$ take values $\Sfnt{c}$ and $\Sfnt{z}$ for 
which $U(\Sfnt{C}=\Sfnt{c},\Sfnt{Z}=\Sfnt{z})\leq\mu_e(\Sfnt{C}=\Sfnt{c}|\Sfnt{Z}=\Sfnt{z})$
is at most $\epsilon$. This defines $U(\Sfnt{C}\Sfnt{Z})=(\epsilon T_n)^{-1/\beta}$ 
as a level-$\epsilon$ probability estimator.

A main theorem of probability estimation is the connection between probability estimators and 
conditional min-entropy estimators, which is formalized as follows:
\begin{theorem} \label{thm:smooth_min_entropy_bound} Suppose that 
the joint distribution $\mu$ of $\Sfnt{Z}$, $\Sfnt{C}$ and $E$ satisfies the chained model $\cH$. 
Let  $1\geq \kappa, \epsilon>0$ and $1\geq  p \geq 1/|\Rng(\Sfnt{C})|$, where $|\Rng(\Sfnt{C})|$
is the number of possible outputs.  Define $\{\phi\}$ to be the event 
that $T_n \geq 1/(p^\beta\epsilon)$, and let $\kappa\leq \Prob_{\mu}(\phi)$.   
  Then the smooth conditional min-entropy satisfies
  \begin{equation*}
    H_{\min}^{\epsilon}(\Sfnt{C}|\Sfnt{Z}E;\phi) \geq
    -\log_2(p/\kappa^{1+1/\beta}). 
    \label{eq:smooth_min_entropy_bound}
  \end{equation*}
\end{theorem}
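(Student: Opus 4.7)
The plan is to produce an explicit $\epsilon$-smoothing $\tilde\nu$ of the conditional distribution $\mu(\cdot|\phi)$ whose guessing probability is bounded by $p/\kappa^{1+1/\beta}$. The nontrivial factor $\kappa^{1+1/\beta}$, rather than the $\kappa$ that would fall out of a naive application of Eq.~\eqref{eq:prob_est}, will be paid for by re-Markov'ing the estimator inequality at a threshold carefully tuned to $\kappa$, so that the $\epsilon$-smoothing budget is spent exactly rather than wasted.

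\emph{Step 1 (sharper tail).} I would apply Eq.~\eqref{eq:prob_est} with $\epsilon\kappa$ in place of $\epsilon$, which is permissible since Eq.~\eqref{eq:prob_est} holds for every positive smoothness. The threshold $(\epsilon\kappa T_n)^{-1/\beta}$ equals $\kappa^{-1/\beta}U(\Sfnt{C}\Sfnt{Z})$, so after averaging over $e$ I obtain
\begin{equation*}
  \Prob_\mu\!\bigl(\mu_E(\Sfnt{C}|\Sfnt{Z})\ge\kappa^{-1/\beta}U(\Sfnt{C}\Sfnt{Z})\bigr)\le\epsilon\kappa.
\end{equation*}
Call this ``bad'' event $\psi$.

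\emph{Step 2 (smoothing and guessing bound).} Define the subnormalized distribution $\tilde\nu(\Sfnt{c},\Sfnt{z},e):=\mu(\Sfnt{c},\Sfnt{z},e\,|\,\phi)\,\mathbf{1}[\psi^c]$. The $\mu(\cdot|\phi)$-mass of $\psi$ is at most $\Prob_\mu(\psi)/\Prob_\mu(\phi)\le\epsilon\kappa/\kappa=\epsilon$, so $\tilde\nu$ lies within TV distance $\epsilon$ of $\mu(\cdot|\phi)$ and qualifies as an $\epsilon$-smoothing. On $\phi$, $T_n\ge 1/(p^\beta\epsilon)$ gives $U\le p$; hence on $\phi\cap\psi^c$ the conditional probabilities satisfy $\mu_e(\Sfnt{c}|\Sfnt{z})<\kappa^{-1/\beta}p$. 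Pushing this pointwise bound through the definition of $P_{\mathrm{guess}}$,
\begin{align*}
  P_{\mathrm{guess}}(\Sfnt{C}|\Sfnt{Z}E)_{\tilde\nu}
    &= \sum_{\Sfnt{z},e}\max_{\Sfnt{c}}\tilde\nu(\Sfnt{c},\Sfnt{z},e) \\
    &\le \frac{\kappa^{-1/\beta}p}{\Prob_\mu(\phi)}\sum_{\Sfnt{z},e}\mu(\Sfnt{z},e) \\
    &\le \frac{p}{\kappa^{1+1/\beta}},
\end{align*}
and taking $-\log_2$ yields the claimed lower bound on $H_{\min}^{\epsilon}(\Sfnt{C}|\Sfnt{Z}E;\phi)$.

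\emph{Main obstacle.} The delicate step is the calibration of the re-Markov threshold: a smaller one lets $\psi$ overrun the $\epsilon$-ball, while a larger one dilutes the pointwise bound $\mu_e(\Sfnt{c}|\Sfnt{z})<\kappa^{-1/\beta}p$. The exponent $1/\beta$ is exactly what balances these two constraints and is responsible for the $\kappa^{1+1/\beta}$ in the denominator; the hypothesis $p\ge 1/|\Rng(\Sfnt{C})|$ plays the role of keeping the resulting $P_{\mathrm{guess}}$ bound consistent with the trivial lower bound $1/|\Rng(\Sfnt{C})|$ on any guessing probability over the output alphabet.
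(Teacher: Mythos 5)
Your Steps 1--2 follow essentially the paper's own route: the substitution of $\epsilon\kappa$ for $\epsilon$ in Eq.~\eqref{eq:prob_est} is exactly the device by which Cor.~\ref{cor:smooth_min_entropy_bound} reduces to Thm.~\ref{thm:esupe_fail_esmaxprob} (an $\epsilon\kappa$-UPE whose threshold is shifted by $\kappa^{-1/\beta}$), and your truncation of $\mu(\cdot|\phi)$ by the indicator of $\psi^{c}$, together with the pointwise cap $\mu_{e}(\Sfnt{c}|\Sfnt{z})<\kappa^{-1/\beta}p$ on $\phi\cap\psi^{c}$, reproduces the paper's subprobability construction ($\tilde\mu$, $\tilde\nu$ and Eq.~\eqref{eq:lm:esupe_fail_esmaxprob1}), merely organized jointly over $(\Sfnt{z},e)$ rather than per value $e$ with the lift of Lem.~\ref{lm:pmax_noe}. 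The arithmetic yielding $p/\kappa^{1+1/\beta}$ for the truncated object is correct.

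The genuine gap is at the very end: your $\tilde\nu$ is subnormalized, whereas the smooth min-entropy of Def.~\ref{def:smoothminentropy} (and of the main text) is defined by optimizing over \emph{normalized} distributions within TV distance $\epsilon$ of $\mu(\cdot|\phi)$. You must therefore add back the missing weight (up to $\epsilon$), and if that mass is added carelessly it can land on the maximizing outcome of each $(\Sfnt{z},e)$, degrading the guessing-probability bound to $p/\kappa^{1+1/\beta}+\epsilon$, which is not the claimed inequality. The repair is exactly the paper's Lem.~\ref{lm:tvsub}: fixing the marginal $\mu(\Sfnt{Z}E|\phi)$, for each $(\Sfnt{z},e)$ one extends the truncated conditional to a normalized conditional that dominates it while staying below the cap $\kappa^{-1/\beta}p\,\mu(\Sfnt{z}e)/\mu(\Sfnt{z}e,\phi)\geq p\geq 1/|\Rng(\Sfnt{C})|$; averaging the caps still gives $p/\kappa^{1+1/\beta}$, and the TV bound $\leq\epsilon$ follows from Lem.~\ref{lm:tvdist_weight} together with Eq.~\eqref{eq:tv_samemarg}, since both the completed distribution and $\mu(\cdot|\phi)$ dominate $\tilde\nu$. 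This is also the true role of the hypothesis $p\geq 1/|\Rng(\Sfnt{C})|$ --- it guarantees there is room to redistribute the missing mass under the cap --- not, as you suggest, consistency with the trivial lower bound on guessing probabilities. With this fill-in step supplied, your argument closes and coincides in substance with the paper's proof.
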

The probability of the event $\{\phi\}$ can be interpreted as the  
probability that the experiment succeeds, and $\kappa$ is an assumed 
lower bound on the success probability.  The theorem is proven in
Appendix~\ref{sec:minentropy_extraction}. 

When constructing PEFs, the power $\beta>0$ must be decided \emph{before} the
experiment and cannot be adapted.
Thm.~\ref{thm:smooth_min_entropy_bound} requires that $p$, $\epsilon$
and $\kappa$ also be chosen \emph{beforehand}, and success of the experiment
requires $T_n \geq 1/(p^\beta\epsilon)$, or equivalently,
\begin{equation} \label{eq:success_con}
\log_{2}(T_{n})/\beta + \log_{2}(\epsilon)/\beta\geq -\log_{2}(p).
\end{equation}
Since $\log_{2}(T_{n})=\sum_{i}\log_{2}(F_{i})$, \emph{before} the experiment 
we choose PEFs in order to aim for large expected values of the logarithms of the PEFs $F_{i}$.  
Consider a generic next trial with results $CZ$ and model $\cC$. Based on prior
calibrations or the frequencies of observed results in past trials, we
can determine a distribution $\nu\in\cC$ that is a good approximation 
to the distribution of the next trial's results $CZ$. Many experiments are designed 
so that each trial's distribution is close to $\nu$. The PEF can be 
optimized for this distribution but, by definition, is valid regardless 
of the actual distribution of the next trial in $\cC$. Thus, one way to 
optimize PEFs \emph{before} the next trial is as follows:
\begin{equation}
  \begin{array}[b]{ll}
    \textrm{Max:}& 
    \Exp_{\nu} \big(n\log_2(F(CZ))/\beta+\log_2(\epsilon)/\beta\big) \\ 
    \textrm{With:}& \sum_{cz}F(cz)\sigma(c|z)^{\beta}\sigma(cz) \leq 1
    \textrm{\ for all $\sigma\in\cC$}, \\
    & F(cz)\geq 0, \textrm{\ for all $cz$}.
  \end{array}\label{eq:opt_pef}
\end{equation}
The objective function is strictly concave and the constraints are
linear, so there is a unique maximum, which can be found by convex
programming.  More details are available in Appendix~\ref{subsec:optimization}. 

Before the experiment, one can also optimize the objective function in 
Eq.~\eqref{eq:opt_pef} with respect to the power $\beta$. During the experiment 
$\epsilon$ and $\beta$ are fixed, so it suffices to maximize
$\Exp_{\nu}\big(\log_{2}(F(CZ))\big)$. If during the experiment, the
running product $T_{i}$ with $i<n$ exceeds the target $1/(p^\beta\epsilon)$, 
we can set future PEFs to $F(CZ)=1$, which is a valid PEF with power $\beta$. 
This ensures that $T_{n}=T_{i}$ and is equivalent to stopping the experiment after 
trial $i$. Since the target needs to be set conservatively 
in order to make the actual experiment succeed with high probability, 
this can result in a significant reduction in the number of trials actually executed.

A question is how PEFs perform asymptotically for a stable experiment. 
This question is answered by determining the rate per trial of entropy 
production assuming constant $\epsilon$ and $\kappa$ independent of the number of trials.  
In view of Thm.~\ref{thm:smooth_min_entropy_bound}, after $n$ trials the entropy rate 
is given by $\big(-\log_2(p)+\log_2(\kappa^{1+1/\beta})\big)/n$. Considering 
Eq.~\eqref{eq:success_con}, when $n$ is large the entropy rate is dominated by 
$\log_{2}(T_{n})/(n\beta)$, which is equal to $\sum_{i=1}^{n}\log_{2}(F_{i})/(n\beta)$.  
Therefore, if each trial has distribution $\nu$ and each trial model is the same $\cC$, 
then in the limit of large $n$ the asymptotic entropy rate witnessed by a PEF $F$ 
with power $\beta$ is given by $\Exp_{\nu}\big(\log_2(F(CZ))/\beta\big)$. Define the rate
\begin{equation}
  g(\beta)=\sup_{F}\Exp_{\nu} \big(\log_2(F(CZ))/\beta\big),
  \label{eq:gain_rate}
\end{equation}
where the supremum is over PEFs $F$ with power $\beta$ for $\cC$. 
The maximum asymptotic entropy rate at constant $\epsilon$ and $\kappa$ 
witnessed by PEFs is $g_{0}=\sup_{\beta>0} g(\beta)$.  The rate
$g(\beta)$ is non-increasing in $\beta$ (see Appendix~\ref{subsec:pef_property}), 
so $g_{0}$ is determined by the limit as $\beta$ goes to zero.  A theorem proven in
Ref.~\cite{knill:2017} is that $g_{0}$ is the worst-case
conditional entropy $H(C|ZE)$ over joint distributions of $CZE$
allowed by $\cC$ with marginal $\nu$.  Since this is a tight upper
bound on the asymptotic randomness rate~\cite{tomamichel:qc2009a}, 
probability estimation is asymptotically optimal and we identify $g_{0}$ as the asymptotic
  randomness rate.  We also remark that probability estimation enables exponential expansion 
  of input randomness~\cite{knill:2017}. 

For finite data and applications requiring fresh blocks of randomness,
the rate $g_{0}$ is not achieved. To understand why, consider
the problem of certifying a fixed number of bits $b$ of randomness at
error bound $\epsilon$ and with as few trials as possible, where each trial
has distribution $\nu$.  In view of
Thm.~\ref{thm:smooth_min_entropy_bound}, the PEF optimization problem
in Eq.~\eqref{eq:opt_pef}, and the definition of $g(\beta)$ in
Eq.~\eqref{eq:gain_rate}, $n$ needs to be sufficiently large so that
\begin{equation} 
  n g(\beta)+\log_{2}(\epsilon)/\beta+(1+1/\beta)\log_{2}(\kappa)\geq b.
  \label{eq:critical_cond}
\end{equation}
The left-hand side is maximized at positive $\beta$, whereas
$g(\beta)$ increases to $g_{0}$ as $\beta$ goes to zero. As a result the
best actual rate $b/n$ is less than $g_{0}$. 

Setting $\kappa=1$ in Eq.~\eqref{eq:critical_cond} shows that the number
of trials $n$ must exceed $-\log_{2}(\epsilon)/(\beta g(\beta))$
before randomness can be produced, which suggests that the maximum of
$\beta g(\beta)$ is a good indicator of finite-data performance.
Another way to arrive at this quantity is to consider
$\epsilon=2^{-\gamma n}$, where $\gamma>0$ is the ``certificate
rate''. Given $\nu$ and the trial model, we can ask for the maximum
certificate rate for which it is possible to have positive entropy
rate at $\kappa=1$.  It follows from Eq.~\eqref{eq:critical_cond}
with $\kappa=1$ that this rate is at most 
\begin{equation}\label{eq:certificate_rate}
\gamma_{\textrm{PEF}} =\sup_{\beta>0}\beta g(\beta).  
\end{equation} 
We propose a general information-theoretic rate-tradeoff problem given trial model 
$\cC$ and $\nu\in\cC$: For a given certificate rate $\gamma$, determine the
supremum of the entropy rates achievable by protocols. Eq.~\eqref{eq:critical_cond} 
implies lower bounds on the resulting tradeoff curve.  

Our protocol assumes classical-only side information.  There are more
costly DIRG protocols that handle quantum side
information~\cite{vazirani:2012, miller_c:qc2014a, miller_c:qc2014b,
  chung:2014, coudron:2014, arnon-friedman:2018}, but verifying that
side information is effectively classical only requires confirming
that the quantum devices used in the experiment have no long-term 
quantum memory.  Verifying the absence of long-term quantum memory 
in current experiments is possibly less difficult than ensuring  
that there are no backdoors or information leaks in the experiment's  
hardware and software. 

\emph{Applications.}  We consider DIRG with the standard two-party,
two-setting, two-outcome Bell-test
configuration~\cite{clauser:qc1969a}. The parties are labeled
$\Pfnt{A}$ and $\Pfnt{B}$.  In each trial, a source prepares a
  state shared between the parties, and each party chooses a random
setting (their input) and obtains a measurement outcome (their
output).  We write $Z=XY$, where $X$ and $Y$ are the inputs of
$\Pfnt{A}$ and $\Pfnt{B}$, and $C=AB$, where $A$ and $B$ are the
respective outputs.  For this configuration, $A, B, X, Y\in\{0,1\}$.

Consider the trial model $\cN$ consisting of distributions of $ABXY$
with uniformly random inputs and satisfying non-signaling~\cite{PRBox}.  
 We begin by determining and comparing the asymptotic randomness rates 
 witnessed by different methods. The rates are usually quantified as functions 
 of the expectation $\hat{I}$ of the CHSH Bell function (Eq.~\ref{eq:CHSH_function}) 
 for $\hat{I}>2$ (the classical upper bound).  
We prove in Appendix~\ref{subsec:gain_rate} that the maximum asymptotic randomness 
rate for any $\nu\in\cN$ is equal to $(\hat{I} -2)/2$, and the rate $g_{0}$ 
witnessed by PEFs matches this value. Most previous studies, such as 
Refs.~\cite{pironio:2013, pironio:2010, fehr:2013, Acin:2012, nieto:2014, 
bancal:2014,  NietoSilleras:qc2016}, estimate the asymptotic randomness 
rate by the single-trial conditional min-entropy $H_{\min}(AB|XYE)$.  We
determine that $H_{\min}(AB|XYE)=-\log_2((6-\hat I)/4)<g_0$ when
$2<\hat I<4$.  As $\hat{I}$ decreases to $2$ the ratio of $g_0$ to 
$H_{\min}(AB|XYE)$ approaches $1.386$,
demonstrating an improvement at small violations.

Next, we investigate finite-data performance. We consider three
different families of quantum-achievable distributions of trial results.  
For the first family $\nu_{\textrm{E},\theta}$, $\Pfnt{A}$ and $\Pfnt{B}$
share the unbalanced Bell state $|\Psi_{\theta}\rangle=\cos\theta
|00\rangle+\sin\theta |11\rangle$ with $\theta\in(0,\pi/4]$ and apply
projective measurements that maximize $\hat I$. This determines
$\nu_{\textrm{E},\theta}$.  This family contains the
  goal states for many experiments suffering from detector
  inefficiency.  For the second family $\nu_{\textrm{W},p}$,
$\Pfnt{A}$ and $\Pfnt{B}$ share a Werner state $\rho=p
|\Psi_{\pi/4}\rangle \langle \Psi_{\pi/4}|+(1-p)\one/4$ with $p
\in(1/\sqrt{2},1]$ and again apply measurements that maximize $\hat
I$.  Werner states are standard examples in quantum
  information and are among the worst states for our application.  In
experiments with photons, measurements are implemented with imperfect
 detectors.  For the third family $\nu_{\textrm{P},\eta}$,
$\Pfnt{A}$ and $\Pfnt{B}$ use detectors with efficiency $\eta\in
(2/3,1)$ to implement the measurements and to close the detection
loophole~\cite{Eberhard1993}.  They choose the unbalanced Bell state
$|\Psi_{\theta}\rangle$ and measurements such that an 
information-theoretic measure of nonlocality, the statistical
strength for rejecting local realism~\cite{vanDam:2005, acin2005, zhang:2010}, 
is maximized.

\begin{figure}
  \begin{center}
    \resizebox{!}{1.5in}{\includegraphics[viewport=5.5cm 10.5cm 16.5cm 18.5cm]{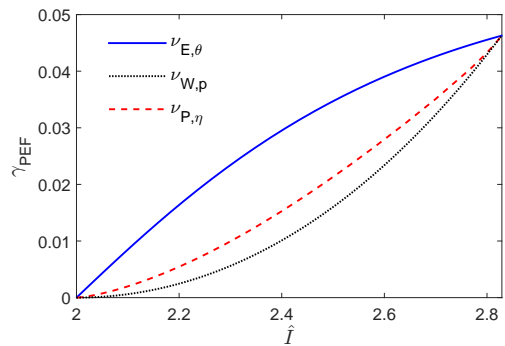}}    
  \end{center}
  \caption{Maximum certificate rates $\gamma_{\textrm{PEF}}$ (Eq.~\eqref{eq:certificate_rate}) as a 
  function of $\hat{I}$ for each family of distributions. }
  \label{fig:certificate_rate}
\end{figure}

For each family of distributions, we determine the maximum certificate rate 
$\gamma_{\textrm{PEF}}$ as given in Eq.~\eqref{eq:certificate_rate}. 
For this, we consider the trial model $\cN$, but we note that 
$\gamma_{\textrm{PEF}}$ does not depend on the specific constraints on the 
quantum-achievable conditional distributions $\Prob(AB|XY)$ (see Appendix~\ref{subsec:certificate_rate}).  
As an indicator of finite-data performance, $\gamma_{\textrm{PEF}}$ depends not only on $\hat I$, 
but also on the distribution $\nu$. To illustrate this behavior, we plot the rates 
$\gamma_{\textrm{PEF}}$ as a function of $\hat I$ for each family of distributions 
in Fig.~\ref{fig:certificate_rate}. To obtain these plots, we note that $\hat I$ 
is a monotonic function of the parameter $\theta$, $p$ or $\eta$ for each family. 
We also find that $\gamma_{\textrm{PEF}}$ is given by the statistical 
strength of the distribution $\nu$ for rejecting local realism (see Appendix~\ref{subsec:certificate_rate} for a proof).  
Conventionally, experiments are designed to maximize $\hat{I}$, but in general, 
the optimal state and measurements maximizing $\hat{I}$ are different from those 
maximizing the statistical strength~\cite{acin2005,zhang:2010}.

We further determine the minimum number of trials, $n_{\textrm{PEF},b}$, required to certify $b$ 
bits of $\epsilon$-smooth conditional min-entropy with a given distribution $\nu$ of trial results. 
From Eq.~\eqref{eq:critical_cond}, we get 
\begin{equation*}
n_{\textrm{PEF},b}=\inf_{\beta>0} \frac{b\beta-\log_{2}(\epsilon)-(1+\beta)\log_2(\kappa)}{\beta g(\beta)},
\label{eq:trial_bound}
\end{equation*}
where for simplicity we allow non-integer values for $n_{\textrm{PEF},b}$.  We can upper bound 
$n_{\textrm{PEF},b}$ by means of the simpler-to-compute certificate rate $\gamma_{\textrm{PEF}}$ 
given in Eq.~\eqref{eq:certificate_rate}.  
For the trial model $\cN$, $\gamma_{\textrm{PEF}}$ is achieved when $\beta$ is above a threshold 
$\beta_0$ that depends on $\nu$ (see Appendix~\ref{subsec:certificate_rate}). 
From $\gamma_{\textrm{PEF}}$ and $\beta_0$, we can determine the upper bound 
\begin{equation*}
n'_{\textrm{PEF},b}=\big(b\beta_0-\log_{2}(\epsilon)-(1+\beta_0)\log_2(\kappa)\big)/\gamma_{\textrm{PEF}}
\label{eq:estimated_trial_bound}
\end{equation*}
on $n_{\textrm{PEF},b}$.  The minimum number 
  of trials required can be determined for other published protocols, which
usually certify conditional min-entropy from $\hat I$.  (An exception
is Ref.~\cite{NietoSilleras:qc2016} but the minimum number of trials
required is worse.)  We consider the protocol ``PM'' of
Ref.~\cite{pironio:2013} and the entropy accumulation protocol ``EAT''
of Ref.~\cite{arnon-friedman:2018}.  From Thm.~1 of
Ref.~\cite{pironio:2013} with $\kappa=1$ and $b\searrow 0$, we obtain
a lower bound
\begin{equation*}
  n_{\textrm{PM},0}=-2\log_{e}(\epsilon)/\big((\hat{I}-2)/(4+2\sqrt{2}) \big)^2.
  \label{eq:PM_trial_bound}
\end{equation*}
For the EAT protocol, we determine an explicit lower bound
$n_{\textrm{EAT},b}$ in Appendix~\ref{subsec:eat}. This lower bound applies for 
$b\geq 0$ and $\epsilon, \kappa\in(0,1]$, and is valid with respect to 
quantum side information for the trial model consisting of  
quantum-achievable distributions.

We compare the three protocols over a broad range of $\hat I$ for
$b\searrow 0$, $\epsilon=10^{-6}$, and $\kappa=1$.  For each family  
of distributions above, we compute the improvement factors
given by $f_{\textrm{PM}}=n_{\textrm{PM},0}/n'_{\textrm{PEF},0}$ and
$f_{\textrm{EAT}}=n_{\textrm{EAT},0}/n'_{\textrm{PEF},0}$.  For
$\nu_{\textrm{W},p}$, the improvement factors depend weakly on $\hat I$:
$f_{\textrm{PM}}$ increases from $3.89$ at $\hat I = 2.008$ to $4.36$
at $\hat{I}=2\sqrt{2}$, while $f_{\textrm{EAT}}$ increases from
$84.97$ at $\hat I = 2.008$ to $86.35$ at $\hat I=2\sqrt{2}$.  For
$\nu_{\textrm{E},\theta}$ and $\nu_{\textrm{P},\eta}$, the improvement
factors can be much larger and depend strongly on $\hat I$, monotonically
decreasing with $\hat I$ as shown in Fig.~\ref{fig:improvement}.  The
improvement is particularly notable at small violations which are
typical in current photonic loophole-free Bell tests.  We remark 
that similar comparison results were obtained with other choices of 
the values for $\epsilon$ and $\kappa$.

\begin{figure}
  \begin{center}
     \resizebox{!}{1.5in}{\includegraphics[viewport=5.5cm 10.5cm 16.5cm 18.5cm]{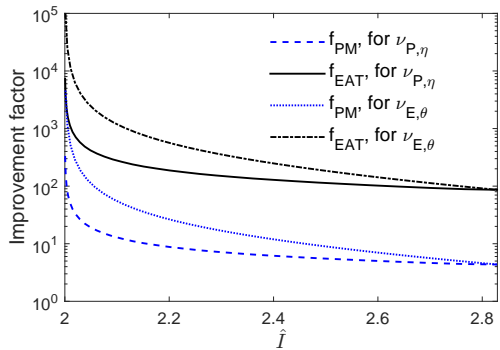}}
  \end{center}
  \caption{Improvement factors as a function of $\hat{I}$. 
  }
  \label{fig:improvement}
\end{figure}

The large latency reduction with probability estimation persists for certifying blocks of 
randomness.  For randomness beacons, good reference values are $b=512$ 
and $\epsilon=2^{-64}$.  We also set $\kappa=2^{-64}$. Setting
$\kappa=\epsilon$ is a common conservative choice, but we remark that
 soundness for randomness generation can be defined with a 
 better tradeoff between $\epsilon$ and $\kappa$~\cite{knill:2017}.  
 We consider the trial model $\cT$ of distributions with uniformly 
 random inputs, satisfying both non-signaling conditions~\cite{PRBox} 
 and Tsirelson's bounds~\cite{Tsirelson:1980}. 
Consider the state-of-the-art photonic loophole-free Bell test reported  
in Ref.~\cite{bierhorst:qc2018a}.  With probability estimation, the number of
trials required for the distribution inferred from the measurement
statistics is $4.668\times 10^{7}$, which would require about $7.78$
minutes of running time in the referenced experiment.  With entropy 
accumulation~\cite{arnon-friedman:2018}, $2.887\times 10^{11}$ trials 
taking $802$ hours would be required. For atomic experiments, we can use 
the distribution inferred from the measurement statistics in Ref.~\cite{rosenfeld:qc2016a}, 
for which probability estimation requires $7.354\times 10^{4}$ trials, while 
entropy accumulation~\cite{arnon-friedman:2018} requires $5.629\times 10^{6}$. 
The experiment of Ref.~\cite{rosenfeld:qc2016a} observed $1$ to $2$ trials per minute, 
so probability estimation would have needed at least $612.8$ hours of data collection, which while 
impractical is still less than the $5.35$ years required by entropy 
accumulation~\cite{arnon-friedman:2018}. 

Finally, we briefly discuss the performance of probability estimation on DIRG with published 
Bell-test experimental data. 
The first experimental demonstration of conditional min-entropy certification 
for DIRG is reported in Ref.~\cite{pironio:2010}. The method therein certifies 
the presence of $42$ random bits at error bound $\epsilon=10^{-2}$ against 
classical side information, where the trial model consists of quantum-achievable 
distributions with uniform inputs. (The lower bound of the protocol 
success probability $\kappa=1$ was used implicitly in Ref.~\cite{pironio:2010}, 
so $\kappa=1$ in the following comparison.)  For the same data but with the less 
restrictive trial model $\cT$, 
probability estimation certifies the presence of at least nine times 
more random bits with $\epsilon=10^{-2}$. With $\epsilon=10^{-6}$ probability estimation can still certify
the presence of $80$ random bits, while other methods fail to certify any random 
bits.  For the loophole-free Bell-test data reported in Ref.~\cite{shalm:2015} and
analyzed in our previous work Ref.~\cite{bierhorst:qc2017a}, the presence of 
$894$ random bits at $\epsilon=10^{-3}$ was certified against classical side 
information with the trial model $\cN$. 
Further, $256$ private random bits within $10^{-3}$ 
(in terms of the total-variation distance) of uniform were extracted in Ref.~\cite{bierhorst:qc2017a}.  
With probability estimation we can certify the presence of approximately two times more random bits 
at $\epsilon=10^{-3}$. The presence of four times more bits can be certified if we 
use the more restrictive trial model $\cT$. Furthermore, we can certify randomness 
even when the input distribution is not precisely known, which was an issue in
the experiment of Ref.~\cite{shalm:2015}.  Applications to other experimental 
distributions, complete analyses of the mentioned experiments, and details on 
handling input choices whose probabilities are not precisely known are in 
Ref.~\cite{knill:2017}. 

In conclusion, probability estimation is a powerful and flexible framework for certifying
randomness in data from a finite sequence of experimental trials.
Implemented with probability estimation factors, it witnesses optimal asymptotic randomness
rates.  For practical applications requiring fixed-size blocks of
random bits, it can reduce the latencies by orders of
magnitude even for high-quality devices.  Latency is a notable problem 
for device-independent quantum key generation (DIQKD). If probability estimation can be 
extended to accommodate security against quantum side information, the 
latency reductions may be extendable to DIQKD 
by means of existing constructions~\cite{arnon-friedman:2018}.

Finally we remark that if the trial results 
are explainable by local realism, no device-independent randomness would be 
certified by probability estimation. The reason is as follows.  For simplicity
 we assume that the input distribution is fixed and known~\footnote{The argument 
can be generalized to the case that the input distribution is not 
precisely known after considering the construction of the corresponding PEFs detailed in Ref.~\cite{knill:2017}.}.
Consider a generic trial with results $CZ$ and model $\cC$. 
Let $\cP_{\text{LR}}$ be the set of distributions of $CZ$ explainable by local realism, 
which is a convex polytope with a finite number of extremal distributions 
$\sigma_{\text{LR},k}$, $k=1,2,...,K$. Since $\cP_{\text{LR}}$ is a subset of $\cC$, 
by definition a PEF $F$ with power $\beta$ satisfies the condition 
\begin{equation}\label{eq:lr_cons}
\sum_{cz}F(cz)\sigma_{\text{LR},k}(c|z)^{\beta}\sigma_{\text{LR},k}(cz) \leq 1,
\end{equation} 
for each $k$.  For each extremal distribution $\sigma_{\text{LR},k}$ in $\cP_{\text{LR}}$ 
and each $cz$, the value of $\sigma_{\text{LR},k}(c|z)$ is either $0$ or $1$, from which it follows that 
$\sigma_{\text{LR},k}(c|z)^{\beta}\sigma_{\text{LR},k}(cz)=\sigma_{\text{LR},k}(cz)$. 
 Eq.~\eqref{eq:lr_cons} now becomes  
\begin{equation}\label{eq:lr_cons2}
\Exp_{\sigma_{\text{LR},k}}\big(F(CZ)\big)=\sum_{cz}F(cz)\sigma_{\text{LR},k}(cz) \leq 1. 
\end{equation}
Since any local realistic distribution can be written as a convex mixture of extremal 
distributions $\sigma_{\text{LR},k}$, $k=1,2,...,K$, Eq.~\eqref{eq:lr_cons2} implies that 
for all distributions $\nu\in\cP_{\text{LR}}$ 
\begin{equation}\label{eq:pbr_def}
\Exp_{\nu}\big(F(CZ)\big)\leq 1. 
\end{equation}
By the concavity of the logarithm function and Eq.~\eqref{eq:pbr_def} we get that 
\begin{equation}
\Exp_{\nu} \big(\log_2(F(CZ))\big)\leq \log_2\big(\Exp_{\nu}(F(CZ))\big) \leq 0. \notag
\end{equation}
Hence, the asymptotic entropy rate in Eq.~\eqref{eq:gain_rate} 
cannot be positive if the distribution of trial results is explainable by local realism.  
Furthermore, Eq.~\eqref{eq:pbr_def} shows
that the PEF $F$ is a test factor for the hypothesis test of local 
realism~\cite{zhang:2011} (see Appendix~\ref{subsec:supermart} for the formal 
definition of test factors). So, if a finite sequence of trial results 
 is  explainable by local realism and $F_{i}$ is a PEF with power 
$\beta$ for the $i$'th trial, according to Ref.~\cite{zhang:2011} 
the success event $T_n \geq 1/(p^\beta\epsilon)$ with 
$T_{n}=\prod_{i=1}^{n}F_{i}$ in Thm.~\ref{thm:smooth_min_entropy_bound} 
for randomness certification would happen with probability 
at most $p^\beta\epsilon$.

\begin{acknowledgments}
  We thank D. N. Matsukevich for providing the
  experimental data for Ref.~\cite{pironio:2010}, Bill Munro, Carl
  Miller, Kevin Coakley, and Paulina Kuo for help with reviewing this paper.  
  This work includes contributions of the National Institute of Standards
  and Technology, which are not subject to U.S. copyright.
\end{acknowledgments}

%

\clearpage
\newpage
\onecolumngrid

\appendix
\section*{Appendix}
\label{sect:appendix}

\section{Notation}

Much of this work concerns stochastic sequences, that is, sequences of
random variables (RVs). RVs are functions on an underlying probability
space.  The range of an RV is called its value space and may be
  thought of as the set of its observable values or realizations.
Here, all RVs have countable value spaces. We truncate sequences of
RVs so that we only consider finitely many RVs at a time.  With this
the underlying probability space is countable too. We use upper-case
letters such as $A,B,\ldots,X,Y,\ldots$ to denote RVs.  The value
space of an RV such as $X$ is denoted by $\Rng(X)$. The cardinality of
the value space of $X$ is $|\Rng(X)|$.  Values of RVs are denoted by
the corresponding lower-case letters. Thus $x$ is a value of $X$,
often thought of as the particular value realized in an experiment.  
 When using symbols for values of RVs, they are implicitly assumed to be
members of the range of the corresponding RV.  In many cases, the
value space is a set of letters or a set of strings of a given length.
We use juxtaposition to denote concatenation of letters and strings.
 Stochastic sequences are denoted by capital bold-face letters, with the 
 corresponding lower-case bold-face letters for their values. For example, 
 we write $\Sfnt{A}=(A_{i})_{i=1}^{N}$ and $\Sfnt{A}_{\le m}=(A_{i})_{i=1}^{m}$.
Our conventions for indices are that we generically use $N$ to denote
a large upper bound on sequence lengths, $n$ to denote the available
length and $i,j,k,l,m$ as running indices.  By convention,
$\Sfnt{A}_{\le 0}$ is the empty sequence of RVs. Its value is 
constant.   When multiple stochastic sequences
are in play, we refer to the collection of $i$'th RVs in the sequences
as the data from the $i$'th trial. We typically imagine the trials as
happening in time and being performed by an experimenter.  We refer to
the data from the trials preceding the upcoming one as the
``past''. The past can also include initial conditions and any
additional information that may have been obtained. These are normally
implicit when referring to or conditioning on the past.

Probabilities are denoted by $\Prob(\ldots)$.  If there are multiple
probability distributions involved, we disambiguate with a subscript
such as in $\Prob_{\nu}(\ldots)$ or simply $\nu(\ldots)$, where $\nu$
is a probability distribution. We generally reserve the symbol $\mu$
for the global, implicit probability distribution, and may write
$\mu(\ldots)$ instead of $\Prob(\ldots)$ or $\Prob_{\mu}(\ldots)$. 
Expectations are similarly
denoted by $\Exp(\ldots)$ or $\Exp_{\mu}(\ldots)$.  If $\phi$ is a
logical expression involving RVs, then $\{\phi\}$ denotes the event
where $\phi$ is true for the values realized by the RVs.  For example,
$\{f(X)>0\}$ is the event $\{x:f(x)>0\}$ written in full
set notation.  The brackets $\{\ldots\}$ are omitted for events inside
$\Prob(\ldots)$ or $\Exp(\ldots)$. As is conventional, commas
separating logical expressions are interpreted as conjunction.  When
the capital/lower-case convention can be unambiguously interpreted, we
abbreviate ``$X=x$'' as ``$x$''. For example, with this convention,
$\Prob(x,y)=\Prob(X=x,Y=y)$.  Furthermore, we omit commas in the
abbreviated notation, so $\Prob(xy)=\Prob(x,y)$.  RVs or functions of
RVs appearing outside an event but inside $\Prob(\ldots)$ or after the
conditioner in $\Exp(\ldots|\ldots)$ result in an expression that is
itself an RV. We can define these without complications because of our
assumption that the event space is countable.  Here are two
examples. $\Probv(f(X)|Y)$ is  a function of the RVs
$X$ and $Y$ and can be described as the RV whose value is
$\Prob(f(X)=f(x)|Y=y)$ whenever the values of $X$ and $Y$ are $x$ and
$y$, respectively. Similarly $\Exp(X|Y)$ is the RV defined as a
function of $Y$, with value $\Exp(X|Y=y)$ whenever $Y$ has value $y$.
Note that $X$ plays a different role before the conditioners in
$\Exp(\ldots)$ than it does in $\Prob(\ldots)$,
as $\Exp(X|Y)$ is not a function of $X$, but only of $Y$.  We comment
that conditional probabilities with conditioners having probability
zero are not well-defined, but in most cases can be defined
arbitrarily. Typically, they occur in a context where they are
multiplied by the probability of the conditioner and thereby
contribute zero regardless. An important context involves
expectations, where we use the convention that when expanding an
expectation over a set of values as a sum, zero-probability
values are omitted. We do so without explicitly adding the constraints
to the summation variables.  We generally use conditional
probabilities without explicitly checking for probability-zero
conditioners, but it is necessary to monitor for well-definedness of
the expressions obtained.

To denote general probability distributions, usually on the joint
value spaces of RVs, we use symbols such as $\mu,\nu,\sigma$, 
with modifiers as necessary.  As mentioned, we reserve the unmodified $\mu$
for the distinguished global distribution under consideration, if
there is one. Other symbols typically refer to probability
distributions defined on the joint range of  a subset of the
available RVs.  We usually just say ``distribution'' instead of
``probability distribution''.  The terms ``distributions on
$\Rng(X)$'' and ``distributions of $X$'' are synonymous.  
 If $\nu$ is a joint distribution of RVs, then we extend the conventions 
 for arguments of $\Prob(\ldots)$ to arguments of $\nu$, as long as all 
 the arguments are determined by the RVs for which $\nu$ is defined. 
 For example, if $\nu$ is a joint distribution of $X$, $Y$, and $Z$, 
 then $\nu(x|y)$ has the expected meaning, as does the RV $\nu(X|Y)$ 
 in contexts requiring no other RVs. Further, $\nu(X)$ and $\nu(XY)$
 are the marginal distributions of $X$ and $XY$, respectively, according 
 to $\nu$.  

In our work, probability distributions are constrained by a ``model'', 
which is defined as a set of distributions and denoted by letters such 
as $\cH$ or $\cC$. The models for trials to be considered here are 
usually convex and closed.  

The total-variation (TV) distance between $\nu$ and $\nu'$ is defined
as
\begin{equation}
  \TV(\nu,\nu') = \sum_{x}(\nu(x)-\nu'(x))\knuth{\nu(x)\geq\nu'(x)}
  = \frac{1}{2}\sum_{x}|\nu(x)-\nu'(x)|,
  \label{eq:def_tv}
\end{equation}
where $\knuth{\phi}$ for a logical expression $\phi$ denotes the
$\{0,1\}$-valued function evaluating to $1$ iff $\phi$ is true.  True
to its name, the TV distance satisfies the triangle inequality. Here
are three other useful properties: First, if $\nu$ and $\nu'$ are joint 
distributions of $X$ and $Y$ and the marginals satisfy $\nu(Y)=\nu'(Y)$, 
then the TV distance between $\nu$ and $\nu'$ is the average of the TV 
distances of the $Y$-conditional distributions:
\begin{equation}\label{eq:tv_samemarg}
 \TV(\nu,\nu')=\sum_{y}\nu(y) \TV(\nu(X|y),\nu'(X|y)).
\end{equation}
Second, if for all $y$, the conditional distributions $\nu(X|y)=\nu'(X|y)$, then the 
TV distance between $\nu$ and $\nu'$ is given by the TV distance between the marginals on $Y$:
\begin{equation}\label{eq:tv_samecond}
\TV(\nu,\nu') = \TV(\nu(Y),\nu'(Y)).
\end{equation}
Third, the TV distance satisfies the data-processing inequality. That
is, for any stochastic process $\cE$ on $\Rng(X)$ and distributions
$\nu$ and $\nu'$ of $X$, $\TV(\cE(\nu),\cE(\nu'))\leq \TV(\nu,\nu')$.
We use this property only for functions $\cE$, but for general forms 
of this result, see Ref.~\cite{pardo:1997}.  The above properties 
of TV distances are well known, specific proofs can be found in 
Refs.~\cite{knill:2017, bierhorst:qc2018a}.

When constructing distributions close to a given one in TV distance, 
which we need to do for the proof of Thm.~\ref{thm:smooth_min_entropy_bound} 
in the main text,  
it is often convenient to work with subprobability distributions.  A
subprobability distribution of $X$ is a sub-normalized non-negative
measure on $\Rng(X)$, which in our case is simply a non-negative
function $\tilde\nu$ on $\Rng(X)$ with weight
$w(\tilde\nu)=\sum_{x}\tilde\nu(x)\leq 1$. For expressions not
involving conditionals, we use the same conventions for subprobability
distributions as for probability distributions.  When comparing
subprobability distributions, $\tilde\nu\leq\tilde\nu'$ means that
for all $x$, $\tilde\nu(x)\leq\tilde\nu'(x)$, and we say that
$\tilde\nu'$ ``dominates'' $\tilde\nu$.

\begin{lemma}\label{lm:tvdist_weight}
  Let $\tilde\nu$ be a subprobability distribution of $X$ of weight
  $w=1-\epsilon$.  Let $\nu$ and $\nu'$ be distributions of $X$
  satisfying $\tilde\nu\leq\nu$ and $\tilde\nu\leq \nu'$. Then
  $\TV(\nu,\nu')\leq \epsilon$.
\end{lemma}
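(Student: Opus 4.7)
The plan is to exploit the asymmetric form of the TV distance given in the first equality of Eq.~\eqref{eq:def_tv}, namely
\[
\TV(\nu,\nu') = \sum_{x}\bigl(\nu(x)-\nu'(x)\bigr)\knuth{\nu(x)\geq\nu'(x)}.
\]
On the event $\{\nu(x)\geq\nu'(x)\}$ I can use the hypothesis $\tilde\nu\leq\nu'$ to replace $\nu'(x)$ by the smaller $\tilde\nu(x)$, obtaining $\nu(x)-\nu'(x)\leq\nu(x)-\tilde\nu(x)$. Then, because $\tilde\nu\leq\nu$ forces $\nu(x)-\tilde\nu(x)\geq 0$ for every $x$, I may drop the indicator and extend the sum to all of $\Rng(X)$ without decreasing it.

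After this substitution the bound becomes
\[
\TV(\nu,\nu') \leq \sum_{x}\bigl(\nu(x)-\tilde\nu(x)\bigr) = 1 - w(\tilde\nu) = \epsilon,
\]
where the first equality uses that $\nu$ is a probability distribution and $w(\tilde\nu)=1-\epsilon$ by hypothesis.

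The argument is essentially one line; the only modest subtlety is recognizing that one should use the one-sided form of TV distance rather than the absolute-value form, since then the two hypotheses $\tilde\nu\leq\nu'$ and $\tilde\nu\leq\nu$ are applied cleanly at separate steps (first to upper-bound the summand, then to justify dropping the indicator). I do not anticipate any real obstacle. A symmetric proof via $|a-b|\leq a+b$ applied to $a(x)=\nu(x)-\tilde\nu(x)$ and $b(x)=\nu'(x)-\tilde\nu(x)$ would work as well, at the cost of one superfluous factor of $2$ that is then absorbed by the $\tfrac{1}{2}$ in the symmetric form of $\TV$.
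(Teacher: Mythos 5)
Your proof is correct and follows essentially the same route as the paper's: bound the one-sided TV sum term-by-term using $\tilde\nu\leq\nu'$, drop the indicator since $\tilde\nu\leq\nu$ makes every term nonnegative, and conclude $\TV(\nu,\nu')\leq 1-w(\tilde\nu)=\epsilon$. No issues.
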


\begin{proof}
  Calculate 
  \begin{align}
    \TV(\nu,\nu') &=
    \sum_{x}(\nu(x)-\nu'(x))\knuth{\nu(x)\geq \nu'(x)}\notag\\
    &\leq \sum_{x}(\nu(x)-\tilde\nu(x))\knuth{\nu(x)\geq \tilde\nu(x)}\notag\\
    &=\sum_{x}(\nu(x)-\tilde\nu(x))\notag\\
    &=1-w=\epsilon.\notag
  \end{align}
\end{proof}

\begin{lemma}\label{lm:tvsub}
  Assume that $p\geq1/|\Rng(X)|$.
  Let $\nu$ be a distribution of $X$ and $\tilde\nu\leq\nu$ a subprobability
  distribution of $X$ with weight $w=1-\epsilon$
  and $\tilde\nu\leq p$. Then there exists a distribution $\nu'$ of $X$
  with $\nu'\geq \tilde\nu$, $\nu'\leq p$, and $\TV(\nu,\nu')\leq \epsilon$.
\end{lemma}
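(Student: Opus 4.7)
The plan is to construct $\nu'$ by adding mass to $\tilde\nu$ (so that $\nu' \geq \tilde\nu$ is automatic) until the total weight reaches $1$, while being careful never to exceed the pointwise ceiling $p$. Once that is done, the TV bound will follow for free from the previous lemma.

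First I would verify that there is enough ``capacity'' to absorb the missing mass. The total mass we need to add is $1 - w = \epsilon$, and at each value $x$ we may add up to $p - \tilde\nu(x) \geq 0$. Summing gives the available capacity
\begin{equation*}
  \sum_{x} \big(p - \tilde\nu(x)\big) \;=\; p\,|\Rng(X)| - w \;\geq\; 1 - (1-\epsilon) \;=\; \epsilon,
\end{equation*}
where the inequality uses the hypothesis $p \geq 1/|\Rng(X)|$. (When $|\Rng(X)|$ is infinite, the convention $1/|\Rng(X)| = 0$ makes $p \geq 0$ trivial and the capacity is a fortiori at least $\epsilon$, provided we assume the sum converges; otherwise capacity is infinite and the step below still succeeds.)

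Next I would exhibit an allocation $a: \Rng(X) \to [0,\infty)$ with $0 \leq a(x) \leq p - \tilde\nu(x)$ and $\sum_x a(x) = \epsilon$, and set $\nu'(x) = \tilde\nu(x) + a(x)$. A convenient choice is proportional filling, $a(x) = \lambda (p - \tilde\nu(x))$ with $\lambda = \epsilon / \sum_x (p - \tilde\nu(x)) \in [0,1]$, which is well-defined by the capacity bound above. By construction $\nu' \geq \tilde\nu$, $\nu' \leq p$ pointwise, and $\sum_x \nu'(x) = w + \epsilon = 1$, so $\nu'$ is a probability distribution satisfying the first two required conditions.

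For the TV bound, both $\nu$ and $\nu'$ are probability distributions that dominate the subprobability distribution $\tilde\nu$ of weight $1 - \epsilon$. Applying Lemma~\ref{lm:tvdist_weight} to this pair directly yields $\TV(\nu, \nu') \leq \epsilon$, completing the argument. The only real content here is the capacity calculation; once that passes, both the construction and the TV estimate are immediate, so I do not expect any genuine obstacle beyond a careful handling of the $|\Rng(X)| = \infty$ edge case, which I would dispatch by noting that the hypothesis becomes vacuous and the same proportional-filling construction continues to work.
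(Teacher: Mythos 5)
Your proof is correct and follows essentially the same route as the paper: the paper likewise observes that $p\geq 1/|\Rng(X)|$, i.e.\ $\sum_x p\geq 1$, leaves enough room to extend $\tilde\nu$ to a distribution $\nu'$ with $\tilde\nu\leq\nu'\leq p$ (without spelling out an explicit allocation), and then applies Lemma~\ref{lm:tvdist_weight} to the two distributions $\nu,\nu'$ dominating $\tilde\nu$. The only cosmetic caveat is that your proportional-filling formula degenerates (giving $\lambda=0$) when the capacity sum $\sum_x\bigl(p-\tilde\nu(x)\bigr)$ diverges, but as you note, in that case any allocation of the missing mass $\epsilon$ respecting the ceiling $p$ works, so the argument stands.
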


\begin{proof}
  Because $p\geq 1/|\Rng(X)|$, that is, $\sum_{x}p \geq 1$, 
  and for all $x$, $\tilde\nu(x)\leq p$, there exists a distribution
  $\nu'\geq\tilde\nu$ with $\nu'\leq p$. Since $\nu'$ and $\nu$ are
  distributions dominating $\tilde\nu$ and by
  Lem.~\ref{lm:tvdist_weight}, $\TV(\nu,\nu')\leq \epsilon$.
\end{proof}

\section{Test Supermartingales and Test Factors}
\label{subsec:supermart}

\begin{definition}
  A \emph{test supermartingale~\cite{shafer:qc2009a} with respect to a
    stochastic sequence $\Sfnt{R}$ and model $\cH$} is a stochastic
  sequence $\Sfnt{T}=(T_{i})_{i=0}^{N}$ with the properties that
  1) $T_{0}=1$, 2) for all $i$ $T_{i}\ge 0$, 3) $T_{i}$ is determined by
  $\Sfnt{R}_{\leq i}$ and the governing distribution, and 4) for all
  distributions in $\cH$, $\Exp(T_{i+1}|\Sfnt{R}_{\le i})\leq T_{i}$.
  The ratios $F_{i}=T_{i}/T_{i-1}$ with $F_{i}=1$ if $T_{i-1}=0$ are
  called the \emph{test factors of $\Sfnt{T}$}.
\end{definition}

Here $\Sfnt{R}$ captures the relevant information that accumulates in
a sequence of trials. It does not need to be accessible to the
experimenter.  Between trials $i$ and $i+1$, the sequence
$\Sfnt{R}_{\le i}$ is called the past. In the definition, we allow for
$T_{i}$ to depend on the governing distribution $\mu$.  With this, for
a given $\mu$, $T_{i}$ is a function of $\Sfnt{R}_{\leq i}$.  Below,
when stating that RVs are determined, we implicitly include the
possibility of dependence on $\mu$ without mention.  The
$\mu$-dependence can arise through expressions such as
$\Exp_{\mu}(G|\Sfnt{R}_{\le i})$ for some $G$, which is determined by
$\Sfnt{R}_{\leq i}$ given $\mu$.  One way to formalize this is to
consider $\mu$-parameterized families of RVs. We do not make this
explicit and simply allow for our RVs to be implicitly parameterized
by $\mu$. We note that the governing distribution in a given
experiment or situation is fixed but usually unknown with most of its
features inaccessible. As a result, many RVs used in mathematical
arguments cannot be observed even in principle. Nevertheless, they
play important roles in establishing relationships between observed
and inferred quantities.

Defining $F_{i}=1$ when $T_{i-1}=0$ makes sense because given
$\{T_{i-1}=0\}$, we have $\{T_{i}=0\}$ with probability $1$.  The
sequence $\Sfnt{F}=(F_{i})_{i=1}^{N}$ satisfies the conditions that for all $i$, 1)
$F_{i}\geq 0$, 2) $F_{i}$ is determined by $\Sfnt{R}_{\le i}$, and 3)
for all distributions in $\cH$, $\Exp(F_{i+1}|\Sfnt{R}_{\le i})\leq 1$.  
We can define test
supermartingales in terms of such sequences: Let $\Sfnt{F}$ be a
stochastic sequence satisfying the three conditions.  Then the
stochastic sequence with members $T_0=1$ and $T_{i}=\prod_{1\leq j\leq
  i}F_{j}$ for $i\geq1$ is a test supermartingale. It suffices to
check that $\Exp(T_{i+1}|\Sfnt{R}_{\leq i}) \leq T_{i}$. This follows
from
\begin{equation*}
  \Exp(T_{i+1}|\Sfnt{R}_{\leq i})=
  \Exp(F_{i+1}T_{i}|\Sfnt{R}_{\leq i})
  = \Exp(F_{i+1}|\Sfnt{R}_{\leq i})T_{i}\leq T_{i},
\end{equation*}
where we pulled out the determined quantity $T_{i}$ from the
conditional expectation.  In this work, we construct test
supermartingales from sequences $\Sfnt{F}$ with the above
properties. We refer to any such sequence as a sequence of test
factors, without necessarily making the associated test
supermartingale explicit. We extend the terminology by calling an RV
$F$ a test factor with respect to $\cH$ if $F\geq 0$ and
$\Exp(F)\leq 1$ for all distributions in $\cH$. Note that 
$F=1$ is a valid test factor.

For an overview of test supermartingales and their properties, see
Ref.~\cite{shafer:qc2009a}. The notion of test supermartingales
and proofs of their basic properties are due to
Ville~\cite{ville:qc1939a} in the same work that introduced the
notion of martingales.  The name ``test supermartingale''
appears to have been introduced in Ref.~\cite{shafer:qc2009a}. Test
supermartingales play an important theoretical role in proving many
results in martingale theory, including that of proving tail bounds
for large classes of martingales. They have been studied and applied
to Bell tests~\cite{zhang:2011,zhang:2013,christensen:qc2015a}.

The definition implies that for a test supermartingale $\Sfnt{T}$, for
all $n$, $\Exp(T_{n})\leq 1$. This follows inductively from
$\Exp(T_{i+1})=\Exp(\Exp(T_{i+1}|\Sfnt{R}_{\le i}))\le \Exp(T_i)$ and
$T_0=1$. An application of Markov's inequality shows that for all
$\epsilon>0$,
\begin{equation}
  \Prob(T_{n} \geq 1/\epsilon)\leq \epsilon.
  \label{eq:testmart_markov}
\end{equation}
Thus, a large final value $t=T_{n}$ of the test supermartingale is
evidence against $\cH$ in a hypothesis test with $\cH$ as the
(composite) null hypothesis.  Specifically, the RV $1/T$ is a
$p$-value bound against $\cH$, where in general, the RV $U$ is a
$p$-value bound against $\cH$ if for all distributions in $\cH$,
$\Prob(U\leq\epsilon)\leq\epsilon$. 

One can produce a test supermartingale adaptively by determining the
test factors $F_{i+1}$ to be used at the next trial.  If the $i$'th
trial's data is $R_{i}$, including any incidental information
obtained, then $F_{i+1}$ is expressed as a function of $\Sfnt{R}_{\le i}$
and data from the $(i+1)$'th trial (a ``past-parameterized'' function
of $R_{i+1}$), and constructed to satisfy $F_{i+1}\geq 0$ and
$\Exp(F_{i+1}|\Sfnt{R}_{\le i})\leq 1$ for any distribution in the
model $\cH$.  Note that inbetween trials, we can effectively stop the
experiment by assigning all future $F_{i+1}=1$, which is a valid test factor, 
conditional on the past. This is equivalent to constructing the stopped process relative
to a stopping rule. This argument also shows that the stopped process
is still a test supermartingale.  

More generally, we use test supermartingales for estimating lower
bounds on products of positive stochastic sequences $\Sfnt{G}$. Such
lower bounds are associated with unbounded-above confidence
intervals. We need the following definition:

\begin{definition} 
  Let $U,V,X$ be RVs and $1\ge\epsilon\ge 0$.  $I=[U,V]$ is a
  \emph{confidence interval for $X$ at level $\epsilon$
    with respect to $\cH$} if for all distributions in $\cH$ 
    we have $\Prob(U\leq X\leq V)\geq 1-\epsilon$. The quantity
    $\Prob(U\leq X\leq V)$ is called the \emph{coverage probability}.
\end{definition}

As noted above, the RVs $U$, $V$ and $X$ may be $\mu$-dependent.  For
textbook examples of confidence intervals such as in Ch. 2.4.3 of Ref~\cite{Shao},
$X$ is a parameter determined by $\mu$, and $U$ and $V$ are 
 obtained according to a known distribution for an estimator of $X$.  
 The quantity $\epsilon$ in the definition is a significance level, 
 which corresponds to a confidence level of $(1-\epsilon)$. 
 The following technical lemma will be used in the next section. 

\begin{lemma} \label{lem:confidence_interval}
  Let $\Sfnt{F}$ and $\Sfnt{G}$ be two stochastic sequences with
  $F_{i}\in[0,\infty)$, $G_{i}\in (0,\infty]$, and $F_{i}$ and
  $G_{i}$ determined by $\Sfnt{R}_{\leq i}$.  Define $T_{0}=1$,
  $T_{i}=\prod_{1\leq j \leq i}F_{i}$ and $U_{0}=1$,
  $U_{i}=\prod_{1\leq j\leq i}G_{i}$, and suppose that for all
  $\mu\in\cH$, $\Exp(F_{i+1}/G_{i+1}|\Sfnt{R}_{\le i}) \leq 1$.
  Then $[T_{n}\epsilon,\infty)$ is a confidence interval for $U_{n}$
  at level $\epsilon$ with respect to $\cH$. 
\end{lemma}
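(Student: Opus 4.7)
The plan is to reduce the claim to Markov's inequality (Eq.~\eqref{eq:testmart_markov}) applied to the ratio process $S_i = T_i/U_i$. The target estimate is equivalent to $\Prob(T_n/U_n \geq 1/\epsilon) \leq \epsilon$, so it suffices to show that $\Sfnt{S}=(S_i)_{i=0}^{n}$ is a test supermartingale with respect to $\Sfnt{R}$ and $\cH$.

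First I would establish that $S_i$ meets the four defining properties. We have $S_0 = T_0/U_0 = 1$; since $F_j \geq 0$ and $G_j > 0$, each $S_i$ is a non-negative RV (interpreting $F_j/G_j$ as $0$ when $G_j=\infty$, consistent with the convention that extends $F_j/G_j$ to cases where naive arithmetic fails, much as $F_i=1$ was defined when $T_{i-1}=0$ in the test-supermartingale definition); and $S_i$ is determined by $\Sfnt{R}_{\leq i}$ because each $F_j$ and $G_j$ is. For the supermartingale property, I would pull $S_i$ out of the conditional expectation as a determined quantity:
\begin{equation*}
  \Exp(S_{i+1}\mid \Sfnt{R}_{\leq i})
  = \Exp\bigl(S_i \cdot (F_{i+1}/G_{i+1})\bigm|\Sfnt{R}_{\leq i}\bigr)
  = S_i \cdot \Exp(F_{i+1}/G_{i+1}\mid \Sfnt{R}_{\leq i})
  \leq S_i,
\end{equation*}
where the final inequality is the standing hypothesis. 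Thus $\Sfnt{S}$ is a test supermartingale with respect to $\Sfnt{R}$ and $\cH$.

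Applying Eq.~\eqref{eq:testmart_markov} to $\Sfnt{S}$ yields $\Prob_{\mu}(S_n \geq 1/\epsilon) \leq \epsilon$ for every $\mu\in\cH$, i.e., $\Prob_{\mu}(U_n \geq T_n\epsilon) \geq 1-\epsilon$. Since $U_n \leq \infty$ always, this is the coverage statement $\Prob_{\mu}(U_n \in [T_n\epsilon,\infty)) \geq 1-\epsilon$, proving the claim.

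The main obstacle, and essentially the only one, is dealing carefully with the extended-real values that $F_j$ and $G_j$ are allowed to take. The potentially problematic cases are $G_j=\infty$ (forcing the convention $F_j/G_j = 0$) and the associated degeneracies of $U_n$ and $S_n$; but each of these edge cases is benign. When $G_j=\infty$ for some $j\leq n$ we have $U_n=\infty$ and the interval $[T_n\epsilon,\infty)$ contains $U_n$ trivially, while $S_n=0$ and Markov's inequality holds vacuously. The argument therefore goes through unchanged once the conventions for arithmetic in $[0,\infty]$ are fixed at the outset, and the remainder is just the standard reduction to Markov's inequality for test supermartingales.
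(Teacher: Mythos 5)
Your proof is correct and follows essentially the same route as the paper's: both identify the termwise ratios $F_{i}/G_{i}$ as test factors generating the test supermartingale $\Sfnt{T}/\Sfnt{U}$ and then conclude via the Markov bound of Eq.~\eqref{eq:testmart_markov}. Your explicit verification of the supermartingale properties and the handling of the $G_{j}=\infty$ edge case simply spell out details the paper leaves implicit.
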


\begin{proof}
  The assumptions imply that the sequence $(F_{i}/G_{i})_{i=1}^{N}$ forms
  a sequence of test factors with respect to $\cH$ and generate the test
  supermartingale $\Sfnt{T}/\Sfnt{U}$, where division in this
  expression is term-by-term.  Therefore, by Eq.~\eqref{eq:testmart_markov},
  \begin{equation}
    \Prob(T_{n}\epsilon \geq U_{n}) = \Prob(T_{n}/U_{n}\geq 1/\epsilon) \leq \epsilon,
    \label{eq:itestmart_markov}
  \end{equation}
  so $[T_{n}\epsilon,\infty)$ is a confidence
  interval for $U_{n}$ at level $\epsilon$.  
\end{proof}

\section{Proof of Main Results}
In this section, we show how to perform probability estimation and how 
to certify smooth conditional min-entropy by probability estimation. 

\subsection{Probability Estimation by Test Supermartingales: Proof of Main Text Eq.~\eqref{eq:prob_est}}
\label{sec:PE_and_PEFs}

We consider the situation where $\Sfnt{CZ}$ is a  time-ordered sequence of $n$ trial
results, and the classical side information is represented by an RV $E$ with 
countable value space. In an experiment, $\Sfnt{Z}$ and $\Sfnt{C}$ are the 
inputs and outputs of the quantum devices, and the side information $E$ is carried 
by an external classical system $\Pfnt{E}$. Before the experiment, the initial state of
$\Pfnt{E}$ may be correlated with the quantum devices. At each trial of the experiment, 
we allow arbitrary one-way communication from the system $\Pfnt{E}$ to the devices.
For example, $\Pfnt{E}$ can initialize the state of the quantum devices  
via a one-way communication channel. We also allow the possibility that the device 
initialization at a trial by $\Pfnt{E}$ depends on the past inputs preceding the trial.  
This implies that the random inputs $\Sfnt{Z}$ can come from public-randomness sources, as first 
pointed out in Ref.~\cite{pironio:2013}.  
However, at any stage of the experiment the information of the outputs $\Sfnt{C}$ 
cannot be leaked to the system $\Pfnt{E}$. After the experiment, we observe $\Sfnt{Z}$ and $\Sfnt{C}$, 
but not the side information $E$.

A model $\cH$ for an experiment is defined as the set of joint probability 
distributions of $\Sfnt{CZ}$ that satisfy the known constraints and
 consists of all achievable probability distributions of $\Sfnt{CZ}$  
conditional on values $e$ of $E$.  Thus we say that a joint distribution $\mu$ of
$\Sfnt{CZ}$ and $E$ satisfies the model $\cH$ 
 if $\mu(\Sfnt{C}\Sfnt{Z}|E=e)\in\cH$ for each value $e$.

We focus on probability estimates with lower bounds on coverage probabilities 
that do not depend on $E$. Our specific goal is to prove Eq.~\eqref{eq:prob_est}   
in the main text. We will show that the probability bound of 
$U(\Sfnt{CZ})=(T_n\epsilon)^{-1/\beta}$ in Eq.~\eqref{eq:prob_est} of the main text %
is an instance of what we call an ``$\epsilon$-uniform probability estimator'':
\begin{definition}
  Let $1\geq\epsilon\geq 0$. The function $U:\Rng(\Sfnt{CZ})\rightarrow [0,\infty)$ is a
  \emph{level-$\epsilon$ $E$-uniform probability estimator for $\cH$}
  ($\epsilon$-$\UPE$ or with specifics, $\epsilon$-$\UPE(\Sfnt{C}|\Sfnt{Z}E;\cH)$) 
  if for all $e$ and distributions $\mu$ satisfying the model $\cH$, we have
  $\Prob_{\mu}(U(\Sfnt{CZ})\geq\mu(\Sfnt{C}|\Sfnt{Z}e)|e)\geq 1-\epsilon$.  We omit 
  specifics such as $\cH$ if they are clear from context. 
  \label{def_UPE}
\end{definition}

  
We can obtain $\epsilon$-$\UPE$s by constructing test supermartingales. In order to   
achieve this goal, we consider models $\cH(\cC)$ of distributions of $\Sfnt{CZ}$  
 constructed from a chain of trial models $\cC_{i+1|\Sfnt{c}_{\le i}\Sfnt{z}_{\le i}e}$,  where the trial model $\cC_{i+1|\Sfnt{c}_{\le i}\Sfnt{z}_{\le i}e}$ is defined as the set of all achievable distributions
of $C_{i+1}Z_{i+1}$ conditional on both the past results $\Sfnt{c}_{\le i}\Sfnt{z}_{\le i}$ and the value $e$ of $E$.  
The chained model $\cH(\cC)$ consists of all conditional distributions $\mu(\Sfnt{C}\Sfnt{Z}|e)$ 
satisfying the following two properties.  
First, for all $i$,  $\Sfnt{c}_{\le i}\Sfnt{z}_{\le i}$, and $e$, the conditional distributions
\begin{equation*}
  \mu(C_{i+1}Z_{i+1}|\Sfnt{c}_{\le i}\Sfnt{z}_{\le i}e)\in \cC_{i+1|\Sfnt{c}_{\le i}\Sfnt{z}_{\le i}e}.
\end{equation*}
Second, the joint distribution $\mu$ of $\Sfnt{CZ}$ and $E$ satisfies that $Z_{i+1}$ is independent of 
$\Sfnt{C}_{\le i}$ conditionally on both $\Sfnt{Z}_{\le i}$ and $E$. 
The second condition is needed in order to be able to estimate 
$\Sfnt{Z}E$-conditional probabilities of $\Sfnt{C}$ and corresponds
to the Markov-chain condition in the entropy accumulation
framework~\cite{arnon-friedman:2018}.  

In many cases, the trial models $\cC_{i+1|\Sfnt{c}_{\le i}\Sfnt{z}_{\le i}e}$
do not depend on the past outputs $\Sfnt{c}_{\le i}$, but probability estimation can take advantage of 
dependence on the past inputs $\Sfnt{z}_{\le i}$. Such dependence captures 
the possibility that at the $(i+1)$'th trial the device initialization by 
the external classical system $\Pfnt{E}$ depends on the past inputs $\Sfnt{z}_{\le i}$. 
In applications involving Bell-test configurations, the trial models
capture constraints on the input distributions and on non-signaling
or quantum behavior of the devices.  For simplicity,
we write $\cC_{i+1}=\cC_{i+1|\Sfnt{c}_{\le i}\Sfnt{z}_{\le i}e}$,
leaving the conditional parameters implicit. Normally, models for individual 
trials $\cC_{i+1}$ are convex and closed.  If they are not, we note that 
our results generally extend to the convex closures of the  trial models used. 

For chained models $\cH(\cC)$, we can construct $\epsilon$-$\UPE$s  from products 
of ``probability estimation factors'' according to the following definition, 
see also the paragraph containing Eq.~\eqref{eq:prob_est} in the main text. 
\begin{definition}
  \label{def:pef}
  Let $\beta>0$, and let $\cC$ be any model, not necessarily convex.
  A \emph{probability estimation factor (PEF) with power $\beta$ for
    $\cC$} is a non-negative RV $F=F(CZ)$ such that for all
  $\sigma\in\cC$, $\Exp_{\sigma}(F\sigma(C|Z)^{\beta})\leq 1$.
\end{definition}
We emphasize that a PEF is a function of the trial results $CZ$, but not 
of the side information $E$. 

Consider the model $\cH(\cC)$ constructed as a chain of trial models $\cC_{i}$. 
Let $F_{i}$ be PEFs with power $\beta>0$ for $\cC_{i}$, past-parameterized 
by $\Sfnt{C}_{< i}$ and $\Sfnt{Z}_{< i}$. Define $T_{0}=1$, 
$T_{i}=\prod_{1\leq j\leq i}F_{j}$ for $i\geq 1$, and 
\begin{equation}
  U(\Sfnt{CZ}) =(T_{n}\epsilon)^{-1/\beta}. \label{eq:upe_final}
\end{equation}
Then, $U(\Sfnt{CZ})$ satisfies the inequality in Eq.~\eqref{eq:prob_est} 
of the main text as proven in the following theorem, and is therefore an 
$\epsilon$-$\UPE$. To simplify notation in the following theorem, we 
normally write the distribution $\mu(\Sfnt{C}\Sfnt{Z}|e)$ conditional on 
$e$ as $\mu_e(\Sfnt{C}\Sfnt{Z})$, abbreviated as $\mu_{e}$.  
\begin{theorem}\label{thm:uest_constr}
  Fix $\beta>0$.  For each value $e$ of $E$, each $\mu_{e} \in \cH(\cC)$, and $\epsilon>0$, 
  the following inequality  holds:
\begin{equation} \label{eq:thm:uest_constr:pefupe}
 \Prob_{\mu_e}\big(\mu_{e}(\Sfnt{C}|\Sfnt{Z}) \ge (\epsilon T_n)^{-1/\beta}\big)\leq \epsilon.
\end{equation}
\end{theorem}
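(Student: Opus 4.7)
The plan is to recast the target event $\{\mu_e(\Sfnt{C}|\Sfnt{Z}) \geq (\epsilon T_n)^{-1/\beta}\}$ as $\{T_n\,\mu_e(\Sfnt{C}|\Sfnt{Z})^\beta \geq 1/\epsilon\}$ (using $\beta>0$ and positivity) and then apply Markov's inequality to the non-negative random variable $M_n := T_n\,\mu_e(\Sfnt{C}|\Sfnt{Z})^\beta$. For the bound to close, I will need to show $\Exp_{\mu_e}(M_n)\leq 1$, which I will establish by recognizing $M_n$ as the terminal value of a test supermartingale built out of the PEFs and the ``true'' past-conditional output probabilities.

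First I would use the second condition defining the chained model $\cH(\cC)$ --- that each $Z_{i}$ is independent of $\Sfnt{C}_{<i}$ given $\Sfnt{Z}_{<i}$ and $E$ --- to obtain the factorization
\begin{equation*}
\mu_e(\Sfnt{C}|\Sfnt{Z}) \;=\; \prod_{i=1}^n \mu_e\bigl(C_i \bigm| \Sfnt{C}_{<i}\,\Sfnt{Z}_{\leq i}\bigr).
\end{equation*}
This follows by chain-rule expansion of $\mu_e(\Sfnt{C}\Sfnt{Z})=\prod_i \mu_e(C_iZ_i|\Sfnt{C}_{<i}\Sfnt{Z}_{<i})=\prod_i \mu_e(Z_i|\Sfnt{Z}_{<i})\,\mu_e(C_i|\Sfnt{C}_{<i}\Sfnt{Z}_{\leq i})$, noticing that the $Z$-factors multiply to $\mu_e(\Sfnt{Z})$, and dividing.

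Next I would define, for $0\leq i\leq n$,
\begin{equation*}
M_i \;=\; T_i \prod_{j=1}^{i}\mu_e\bigl(C_j \bigm| \Sfnt{C}_{<j}\,\Sfnt{Z}_{\leq j}\bigr)^{\beta}
\end{equation*}
and verify that $\Exp_{\mu_e}(M_{i+1}\mid \Sfnt{C}_{\leq i}\Sfnt{Z}_{\leq i})\leq M_i$. Since $T_i$ and the past-parameterized PEF $F_{i+1}$ are determined by $\Sfnt{C}_{\leq i}\Sfnt{Z}_{\leq i}$, we can pull $M_i$ out of the inner expectation, leaving
\begin{equation*}
\Exp_{\mu_e}\!\Bigl(F_{i+1}(C_{i+1}Z_{i+1})\,\mu_e\bigl(C_{i+1}\bigm|\Sfnt{C}_{\leq i}\Sfnt{Z}_{\leq i+1}\bigr)^{\beta}\,\Bigm|\, \Sfnt{C}_{\leq i}\Sfnt{Z}_{\leq i}\Bigr).
\end{equation*}
The first condition in the definition of $\cH(\cC)$ says that $\sigma:=\mu_e(C_{i+1}Z_{i+1}|\Sfnt{C}_{\leq i}\Sfnt{Z}_{\leq i})$ lies in $\cC_{i+1}$, and under this identification $\sigma(C|Z)$ coincides with $\mu_e(C_{i+1}|\Sfnt{C}_{\leq i}\Sfnt{Z}_{\leq i+1})$. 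The defining inequality for a PEF of power $\beta$ for $\cC_{i+1}$ then yields $\Exp_\sigma(F_{i+1}\sigma(C|Z)^\beta)\leq 1$, exactly what is required.

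Iterating the tower law gives $\Exp_{\mu_e}(M_n)\leq M_0 = 1$, so Markov's inequality finishes the argument:
\begin{equation*}
\Prob_{\mu_e}\bigl(M_n \geq 1/\epsilon\bigr) \;\leq\; \epsilon\,\Exp_{\mu_e}(M_n) \;\leq\; \epsilon,
\end{equation*}
which is the claim after translating back. The main obstacle is the correct bookkeeping in the first step: identifying precisely which conditional probabilities need to augment $T_n$ so that the PEF inequality is exactly what matches. The input-independence condition is indispensable there, because without it one cannot write $\mu_e(\Sfnt{C}|\Sfnt{Z})$ as a product over conditionals with no conditioning on the \emph{future} inputs, and consequently cannot align the $\beta$-power factor with the trial-model distribution $\sigma(C|Z)$ that the PEF definition constrains.
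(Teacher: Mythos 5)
Your proposal is correct and follows essentially the same route as the paper: the same factorization of $\mu_e(\Sfnt{C}|\Sfnt{Z})$ into trial-wise conditionals via the input-independence condition, the same supermartingale $T_i\prod_{j\leq i}\mu_e(C_j|\Sfnt{C}_{<j}\Sfnt{Z}_{\leq j})^{\beta}$ whose increments are controlled by the per-trial PEF inequality, and the same conclusion via Markov's inequality (which the paper packages as Lem.~\ref{lem:confidence_interval}).
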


Note that $\beta$ cannot be adapted during the trials.  On the other
hand, before the $i$'th trial, we can design the PEFs $F_{i}$ for
the particular constraints relevant to the $i$'th trial.

\begin{proof}
  We first observe that for each value $e$ of $E$, 
  \begin{equation} \label{eq:prob_de_compos}
    \prod_{j=0}^{i-1}\mu_e(C_{j+1}|Z_{j+1}\Sfnt{Z}_{\le j}\Sfnt{C}_{\le j}) =
    \mu_e(\Sfnt{C}_{\leq i}|\Sfnt{Z}_{\leq i}).
  \end{equation} 
  This follows by induction with the identity
  \begin{align}
    \mu_e(\Sfnt{C}_{\le j+1}|\Sfnt{Z}_{\le j+1}) &=
    \mu_e(C_{j+1}|Z_{j+1}\Sfnt{Z}_{\le j}\Sfnt{C}_{\le j})
    \mu_e(\Sfnt{C}_{\le j}|Z_{j+1}\Sfnt{Z}_{\le j})\notag\\
    &= \mu_e(C_{j+1}|Z_{j+1}\Sfnt{Z}_{\le j}\Sfnt{C}_{\le j})
    \mu_e(\Sfnt{C}_{\le j}|\Sfnt{Z}_{\le j}) \notag
  \end{align}
  by conditional independence of $Z_{j+1}$ on $\Sfnt{C}_{\le j}$ given 
  $\Sfnt{Z}_{\le j}$ and $E=e$.

  We claim that for each $e$, $F_{i+1}\mu_e(C_{i+1}|Z_{i+1}\Sfnt{Z}_{\leq
    i}\Sfnt{C}_{\leq i})^{\beta}$ is a test factor determined by
  $\Sfnt{C}_{\leq i+1}\Sfnt{Z}_{\leq i+1}$.  To prove this claim, for all
  $\Sfnt{c}_{\leq i}\Sfnt{z}_{\leq i}$, the distributions
  $\nu=\mu_e(C_{i+1}Z_{i+1}|\Sfnt{c}_{\leq i}\Sfnt{z}_{\leq
    i})\in\cC_{i+1}$.  With
  $F_{i+1}=F_{i+1}(C_{i+1}Z_{i+1};\Sfnt{c}_{\leq i}\Sfnt{z}_{\leq
    i})$, we obtain the bound
  \begin{align}
    \Exp\left(F_{i+1}\mu_e(C_{i+1}|Z_{i+1}\Sfnt{z}_{\leq
        i}\Sfnt{c}_{\leq i})^{\beta}|\Sfnt{c}_{\leq i}\Sfnt{z}_{\leq i}\right)
    &=\Exp_{\nu}\left(F_{i+1}\nu(C_{i+1}|Z_{i+1})^{\beta}\right)\notag\\
    &\leq 1, \notag
  \end{align}
  where we invoked the assumption that $F_{i+1}$ is a PEF with
  power $\beta$ for $\cC_{i+1}$.  By arbitrariness of
  $\Sfnt{c}_{\leq i}\Sfnt{z}_{\leq i}$, and because the factors 
  $F_{i+1}\mu_e(C_{i+1}|Z_{i+1}\Sfnt{Z}_{\leq i}\Sfnt{C}_{\leq i})^{\beta}$ 
  are determined by $\Sfnt{C}_{\leq i+1}\Sfnt{Z}_{\leq i+1}$, the claim
  follows.  The product of these test factors is
  \begin{align}
    \prod_{j=0}^{i-1}F_{j+1}\mu_e(C_{j+1}|Z_{j+1}\Sfnt{Z}_{\leq
      j}\Sfnt{C}_{\leq j})^{\beta} &= T_{i}\prod_{j=0}^{i-1}\mu_e(C_{j+1}|Z_{j+1}\Sfnt{Z}_{\leq
      j}\Sfnt{C}_{\leq j})^{\beta}\notag\\
    &= T_{i}\mu_e(\Sfnt{C}_{\leq i}|\Sfnt{Z}_{\leq i})^{\beta},
    \label{eq:testmart_proof}
  \end{align}
  with $T_{i}=\prod_{j=1}^{i}F_{j}$. To obtain the last equality above, we used Eq.~\eqref{eq:prob_de_compos}.  
  Thus, for each $e$, the sequence $Q_0 = 1$ and $Q_i=T_{i}\mu_e(\Sfnt{C}_{\leq i}|\Sfnt{Z}_{\leq i})^{\beta}$ for $i>0$
  satisfies the supermartingale property $\mathbb E_{\mu_e} (Q_{i+1}|\Sfnt{C}_{\leq i}\Sfnt{Z}_{\leq
      i})\le Q_i$. We remark that as a consequence, 
      $\Exp_{\mu_e}(Q_{i+1})=\Exp_{\mu_e}\big(\Exp_{\mu_e}(Q_{i+1}|\Sfnt{C}_{\leq
  i}\Sfnt{Z}_{\leq i})\big) \leq \Exp_{\mu_e}(Q_{i})$.  By induction
  this gives $\Exp_{\mu_e}(Q_{n})=\Exp_{\mu_e}(T_{n} \mu_e(\Sfnt{C}|\Sfnt{Z})^{\beta}) \leq 1$.  
  Thus, considering that $T_{n}=\prod_{i=1}^{n}F_{i}\geq 0$, $T_{n}$ is a PEF 
  with power $\beta$ for $\cH(\cC)$, that is, chaining PEFs yields PEFs for chained models.
  
  In Lem.~\ref{lem:confidence_interval}, if we replace $T_i$ and $U_{i}$ there
  by $T_i$ and $\mu_e(\Sfnt{C}_{\leq i}|\Sfnt{Z}_{\leq i})^{-\beta}$ here, then 
  from Eq.~\eqref{eq:itestmart_markov} and manipulating the inequality
  inside $\Prob(.)$, we get the inequality in Eq.~\eqref{eq:thm:uest_constr:pefupe}.
\end{proof}

That $F_{i+1}$ can be parameterized in terms of the past as
$F_{i+1}=F_{i+1}(C_{i+1}Z_{i+1}; \Sfnt{C}_{\leq i}\Sfnt{Z_{\leq i}})$
allows for adapting the PEFs based on $\Sfnt{C}\Sfnt{Z}$, but
no other information can be used.  To adapt the PEF $F_{i+1}$ 
based on other past information besides $\Sfnt{C}_{\leq i}\Sfnt{Z_{\leq i}}$, 
we need a ``soft'' generalization of probability estimation as detailed in Ref.~\cite{knill:2017}.

\subsection{Smooth Min-Entropy by Probability Estimation: Proof of Main Text Thm.~\ref{thm:smooth_min_entropy_bound}}
\label{sec:minentropy_extraction}

We want to generate bits that are near-uniform
conditional on $E$ and often other variables such as $\Sfnt{Z}$. For
our analyses, $E$ is not particularly an issue because our results
hold uniformly for all values of $E$, that is, conditionally on
$\{E=e\}$ for each $e$. However this is not the case for
$\Sfnt{Z}$. For this subsection, it is not necessary to
  structure the RVs as stochastic sequences, so below we use $C$ and
  $Z$ in place of $\Sfnt{C}$ and $\Sfnt{Z}$. 

\begin{definition}\label{def:smoothminentropy}
  The distribution $\mu$ of $CZE$ has \emph{$\epsilon$-smooth average
    $ZE$-conditional maximum probability $p$} if there exists a
  distribution $\nu$ of $CZE$ with $\TV(\nu,\mu)\leq\epsilon$ and
  $\sum_{ze}\max_{c}(\nu(c|ze))\nu(ze) \leq p$. The minimum $p$ for
  which $\mu$ has $\epsilon$-smooth average $ZE$-conditional
  maximum probability $p$ is denoted by $P^{\epsilon}_{\max,\mu}(C|ZE)$.  The quantity
  $H^{\epsilon}_{\min,\mu}(C|ZE)=-\log_{2}(P^{\epsilon}_{\max,\mu}(C|ZE))$ is 
  the \emph{ (classical) $\epsilon$-smooth $ZE$-conditional min-entropy}.
\end{definition}

We denote the $\epsilon$-smooth $ZE$-conditional min-entropy evaluated
  conditional on an event $\{\phi\}$ by $H_{\min}^{\epsilon}(C|ZE;\phi)$. 
We refer to the smoothness parameters as ``error bounds''.  Observe
that the definitions are monotonic in the error bound.  For example,
if $P^{\epsilon}_{\max,\mu}\leq p$ and $\epsilon'\geq \epsilon$, then
$P^{\epsilon'}_{\max,\mu}\leq p$.  The quantity
$\sum_{ze}\max_{c}(\nu(c|ze))\nu(ze)$ in the definition of
$P^{\epsilon}_{\max,\mu}$ can be recognized as the (average) maximum
guessing probability of $C$ given $Z$ and $E$ (with respect to $\nu$),
whose negative logarithm is the guessing entropy defined, for example,
in Ref.~\cite{konig:2008}.

A summary of the relationships between smooth conditional min-entropies 
and randomness extraction with respect to quantum side information
is given in Ref.~\cite{koenig:qc2009a} and can be specialized to classical 
side information. When so specialized, the definition of the smooth 
conditional min-entropy in, for example, Ref.~\cite{koenig:qc2009a} 
differs from the one above in that Ref.~\cite{koenig:qc2009a} uses one 
of the fidelity-related distances. One such distance reduces to the 
Hellinger distance $h$ for probability distributions for which 
$h^{2}\leq \TV \leq \sqrt{2}h$.

The $Z$-conditional maximum probabilities with respect to $E=e$ can be lifted to 
the $ZE$-conditional maximum probabilities, as formalized by the next lemma.

\begin{lemma}\label{lm:pmax_noe}
  Suppose that for all $e$, $P^{\epsilon_{e}}_{\max,\mu(CZ|e)}(C|Z)\leq p_{e}$, 
  and let $\bar\epsilon=\sum_{e}\epsilon_{e}\mu(e)$ and $\bar p=\sum_{e}p_{e}\mu(e)$. 
  Then $P^{\bar\epsilon}_{\max,\mu(CZE)}(C|ZE)\leq \bar p$.
\end{lemma}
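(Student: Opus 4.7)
The plan is to witness the smoothed distribution by gluing together the witnesses supplied for each value of $E$. By hypothesis, for each $e$ there exists a distribution $\nu_e$ of $CZ$ such that $\TV(\nu_e,\mu(CZ|e))\leq \epsilon_e$ and $\sum_z \max_c(\nu_e(c|z))\nu_e(z)\leq p_e$. Define a joint distribution $\nu$ of $CZE$ by $\nu(cze)=\nu_e(cz)\mu(e)$; equivalently, $\nu$ has the same $E$-marginal as $\mu$, namely $\nu(E)=\mu(E)$, and its $E$-conditionals are $\nu(CZ|e)=\nu_e(CZ)$.

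First I would verify the total-variation bound. Since $\nu$ and $\mu$ agree on the marginal distribution of $E$, property~\eqref{eq:tv_samemarg} (applied with $Y=E$ and $X=CZ$) gives
\begin{equation*}
\TV(\nu,\mu)=\sum_e \mu(e)\,\TV\bigl(\nu(CZ|e),\mu(CZ|e)\bigr)=\sum_e \mu(e)\,\TV(\nu_e,\mu(CZ|e))\leq \sum_e \mu(e)\epsilon_e=\bar\epsilon.
\end{equation*}

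Next I would bound the average conditional maximum probability of $C$ given $ZE$ under $\nu$. Using $\nu(ze)=\nu(z|e)\nu(e)=\nu_e(z)\mu(e)$ and $\nu(c|ze)=\nu_e(c|z)$, compute
\begin{equation*}
\sum_{ze}\max_c\bigl(\nu(c|ze)\bigr)\nu(ze)=\sum_e \mu(e)\sum_z \max_c\bigl(\nu_e(c|z)\bigr)\nu_e(z)\leq \sum_e \mu(e)\,p_e=\bar p.
\end{equation*}
Together with the TV bound, $\nu$ serves as the required smoothing witness in Definition~\ref{def:smoothminentropy}, so $P^{\bar\epsilon}_{\max,\mu(CZE)}(C|ZE)\leq \bar p$, as desired.

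The argument is essentially bookkeeping; the only mild point to watch is that the construction of $\nu$ preserves the $E$-marginal exactly, which is what licenses the use of the average-over-conditionals formula for the TV distance. If instead one tried to smooth $\mu$ directly without this structural observation, one would have to invoke the triangle inequality together with a more elaborate coupling, which would not yield the clean weighted-average bounds $\bar\epsilon$ and $\bar p$. Hence the main (small) obstacle is recognizing that the right construction keeps $\nu(E)=\mu(E)$ so that Eq.~\eqref{eq:tv_samemarg} applies verbatim.
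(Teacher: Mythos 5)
Your proof is correct and follows essentially the same route as the paper's: construct the witness $\nu(cze)=\nu_e(cz)\mu(e)$ from the per-$e$ witnesses, apply Eq.~\eqref{eq:tv_samemarg} using $\nu(E)=\mu(E)$ for the total-variation bound, and average the conditional maximum probabilities to get $\bar p$. No gaps.
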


\begin{proof}
  For each $e$, let $\nu_{e}$ witness $P^{\epsilon_{e}}_{\max,\mu(CZ|e)}(C|Z)\leq p_{e}$.
  Then $\TV(\nu_{e},\mu(CZ|e))\leq\epsilon_{e}$ and $\sum_{z}\max_{c}(\nu_{e}(c|z))\nu_{e}(z) \leq p_{e}$.
  Define $\nu$ by $\nu(cze)=\nu_{e}(cz)\mu(e)$. Then the marginals $\nu(E)=\mu(E)$, so we can apply 
  Eq.~\eqref{eq:tv_samemarg} for 
  \begin{equation*}
    \TV(\nu,\mu)=\sum_{e}\TV(\nu_{e},\mu(CZ|e))\mu(e)
    \leq \sum_{e}\epsilon_{e}\mu(e)=\bar\epsilon.
  \end{equation*}
  Furthermore, 
    \begin{align}
    \sum_{ze}\max_{c}(\nu(c|ze))\nu(ze)
    &= \sum_{e}\mu(e)\sum_{z}\max_{c}(\nu_{e}(c|z))\nu_e(z)\notag\\
    &\leq \sum_{e}\mu(e) p_{e} = \bar p, \notag
  \end{align}
  as required for the conclusion.
\end{proof}

The level of a probability estimator relates to the smoothness
parameter for smooth min-entropy via the relationships 
established below.  

\begin{theorem}\label{thm:esupe_fail_esmaxprob}
  Suppose that $U$ is an $\epsilon$-$\UPE(C|ZE; \cH)$ 
  and that the distribution $\mu$ of $CZE$ satisfies the model $\cH$.  
  Let $p\geq 1/|\Rng(C)|$ and $\kappa=\mu(U\leq p)$. 
  Then $P^{\epsilon/\kappa}_{\max,\mu(CZE|U\leq p)}(C|ZE) \leq
  p/\kappa$.
\end{theorem}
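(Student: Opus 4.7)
The plan is to construct a probability distribution $\nu'$ on the range of $CZE$ that is within TV distance $\epsilon/\kappa$ of $\mu(CZE\qvbar\phi)$ (writing $\phi=\{U(CZ)\leq p\}$) and whose $ZE$-conditional maximum probability satisfies $\sum_{ze}\max_{c}\nu'(cze)\leq p/\kappa$. The key observation is that on the intersection of $\phi$ with the UPE ``good event'' $G=\{U(CZ)\geq\mu(C\qvbar ZE)\}$, the conditional probability $\mu(c\qvbar ze)$ is automatically bounded above by $p$. Since the $\epsilon$-UPE property gives $\mu(G\qvbar e)\geq 1-\epsilon$ for every $e$, we have $\mu(G)\geq 1-\epsilon$ and therefore $\mu(G\cap\phi)\geq\kappa-\epsilon$.

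First I would define the subprobability distribution
$\tilde\sigma(cze):=\mu(cze)\knuth{U(cz)\leq p}\knuth{\mu(c\qvbar ze)\leq U(cz)}$,
which is supported on $G\cap\phi$ and hence has weight at least $\kappa-\epsilon$. On its support both indicators equal $1$, so $\mu(c\qvbar ze)\leq U(cz)\leq p$, yielding the pointwise bound $\tilde\sigma(cze)\leq p\,\mu(ze)$. Rescaling, set $\tilde\sigma':=\tilde\sigma/\kappa$. Because $\tilde\sigma\leq\mu(\cdot)\knuth{\phi}$, we have $\tilde\sigma'\leq\mu(\cdot\qvbar\phi)$, the weight of $\tilde\sigma'$ is at least $1-\epsilon/\kappa$, and $\tilde\sigma'(cze)\leq p\,\mu(ze)/\kappa$.

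The next step is to extend $\tilde\sigma'$ to a genuine probability distribution $\nu'$ on $CZE$ satisfying $\nu'\geq\tilde\sigma'$ and the same pointwise ceiling $\nu'(cze)\leq p\,\mu(ze)/\kappa$. This is the analog of Lem.~\ref{lm:tvsub} but with a $(z,e)$-dependent cap rather than a constant one. The only thing to verify is feasibility: the total capacity of the ceiling is $\sum_{cze}p\,\mu(ze)/\kappa=\qvbar\Rng(C)\qvbar\,p/\kappa\geq 1$, where the hypothesis $p\geq 1/\qvbar\Rng(C)\qvbar$ together with $\kappa\leq 1$ is precisely what we need. Hence the missing mass $1-w(\tilde\sigma')\leq\epsilon/\kappa$ can always be deposited in unsaturated cells (for example greedily, or slice by slice in $ze$).

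Finally, since both $\nu'$ and $\mu(\cdot\qvbar\phi)$ dominate $\tilde\sigma'$, Lem.~\ref{lm:tvdist_weight} immediately gives $\TV(\nu',\mu(\cdot\qvbar\phi))\leq 1-w(\tilde\sigma')\leq\epsilon/\kappa$. The ceiling bound yields $\sum_{ze}\max_{c}\nu'(cze)\leq\sum_{ze}p\,\mu(ze)/\kappa=p/\kappa$, which is the required bound on the smooth average $ZE$-conditional maximum probability. The main obstacle is the extension step: it is where the hypothesis $p\geq 1/\qvbar\Rng(C)\qvbar$ enters, and it requires a mild generalization of Lem.~\ref{lm:tvsub} to a non-constant cap. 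An alternative route would be to do the argument per value of $E$ (bounding $P^{\epsilon_e/\kappa_e}_{\max,\mu(CZ\qvbar\phi,e)}(C\qvbar Z)\leq p/\kappa_e$ with $\epsilon_e=\mu(G^{c}\qvbar e)$) and then invoke Lem.~\ref{lm:pmax_noe} to lift to a joint $ZE$-conditional bound, which gives $\bar p=\sum_e(p/\kappa_e)\mu(e\qvbar\phi)=p/\kappa$ and $\bar\epsilon=\sum_{e}\epsilon_{e}\mu(e)/\kappa\leq\epsilon/\kappa$ after a short calculation.
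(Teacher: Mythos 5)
Your argument is correct, and it takes a noticeably more direct route than the paper's proof. The paper fixes a value $e$ of $E$, performs the truncation-and-extension argument separately for each $z$ (with the $z$-dependent caps $p/\kappa_{z}$, $\kappa_{z}=\mu(U\leq p|z)$, using Lem.~\ref{lm:tvsub} slice by slice and gluing the slices with Eq.~\eqref{eq:tv_samemarg}), and then lifts the resulting per-$e$ bounds $P^{\epsilon/\kappa_{e}}_{\max,\mu(CZ|e,U\leq p)}(C|Z)\leq p/\kappa_{e}$ to the $ZE$-conditional statement via Lem.~\ref{lm:pmax_noe} and the identity $\sum_{e}\mu(e|U\leq p)/\kappa_{e}=1/\kappa$. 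You instead carry out a single global construction on $\Rng(CZE)$: truncate $\mu$ to $\{U\leq p\}\cap\{U\geq \mu(C|ZE)\}$, rescale by $\kappa$, observe the cell-wise ceiling $\tilde\sigma'(cze)\leq p\,\mu(ze)/\kappa$, extend under that ceiling, and get the TV bound from domination alone (Lem.~\ref{lm:tvdist_weight}), with no conditioning on $z$ or $e$ and no need for Lem.~\ref{lm:pmax_noe} or Eq.~\eqref{eq:tv_samemarg}. The price is exactly the point you flag: Lem.~\ref{lm:tvsub} must be generalized from a constant cap to a $(z,e)$-dependent cap, but the only requirement is global feasibility, $\sum_{cze}p\,\mu(ze)/\kappa=|\Rng(C)|\,p/\kappa\geq 1$, which is where $p\geq 1/|\Rng(C)|$ enters (in the paper the same hypothesis enters through $p/\kappa_{z}\geq p\geq 1/|\Rng(C)|$ in each slice). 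Your approach buys a shorter, decomposition-free proof and uses only the averaged UPE guarantee $\mu(U\geq\mu(C|ZE))\geq 1-\epsilon$ rather than the per-$e$ statement; the paper's slicing buys reusable intermediate results (the per-$e$ bound and the lifting lemma) at the cost of more bookkeeping. The alternative route you sketch at the end, with per-$e$ bounds combined through Lem.~\ref{lm:pmax_noe}, is essentially the paper's proof (the paper uses smoothness $\epsilon/\kappa_{e}$ for every $e$, while your variant with $\epsilon_{e}=\mu(G^{c}|e)$ gives the marginally sharper $\bar\epsilon=\sum_{e}\epsilon_{e}\mu(e)/\kappa\leq\epsilon/\kappa$; both suffice).
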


\begin{proof}
  Let $\kappa_{e}=\mu(U\leq p|e)$.
  Below we show that for all values $e$ of $E$,
  $P^{\epsilon/\kappa_{e}}_{\max,\mu(CZ|e,U\leq p)}(C|Z)\leq p/\kappa_{e}$.
  Once this is shown, we can use
  \begin{equation}
    \sum_{e}\frac{1}{\kappa_{e}}\mu(e|U\leq p) 
    = \sum_{e}\frac{1}{\mu(U\leq p|e)}\mu(e|U\leq p)
    = \sum_{e}\frac{\mu(e)}{\mu(U\leq p)} = 1/\kappa,
  \end{equation}
  and Lem.~\ref{lm:pmax_noe} to complete the proof.  For the
  remainder of the proof, $e$ is fixed, so we simplify the notation
  by universally conditioning on $\{E=e\}$ and omitting the explicit
  condition. Further, we omit $e$ from suffixes.  Thus
  $\kappa=\kappa_{e}$ from here on.  

  Let $\kappa_{z}=\mu(U\leq p|z)$.  We have $\sum_{z}\kappa_{z}\mu(z)=\kappa$ and 
  \begin{equation} 
  \kappa_{z}=\mu(z|U\leq p)\kappa/\mu(z). 
  \label{eq:kappa}
  \end{equation}
  Define the subprobability distribution $\tilde\mu$ by
  $\tilde\mu(cz)=\mu(cz)\knuth{U(cz)\geq \mu(c|z)}$. By the definition
  of $\epsilon$-UPEs, we get that the weight of $\tilde{\mu}$ satisfies 
  \begin{align}\label{eq:weight_mu}
  w(\tilde{\mu})&=\sum_{cz}\mu(cz)\knuth{U(cz)\geq\mu(c|z)}\notag\\
    &= \mu(U(CZ)\geq\mu(C|Z))\notag\\
    &\geq 1-\epsilon.
  \end{align}

  Define $\tilde\nu(cz) = \tilde\mu(cz)\knuth{U(cz)\leq p}/\kappa$.
  The weight of $\tilde\nu$ satisfies
  \begin{align} 
    w(\tilde\nu) &= \sum_{cz}\tilde\mu(cz)\knuth{U(cz)\leq p}/\kappa\notag\\
    &\leq \sum_{cz}\mu(cz)\knuth{U(cz)\leq p}/\kappa\notag\\
    & = \mu(U\leq p)/\kappa = 1,\\ 
    w(\tilde\nu) &= 
    \sum_{cz}\mu(cz)\knuth{U(cz)\leq p}/\kappa 
    -\sum_{cz}(\mu(cz)-\tilde\mu(cz))\knuth{U(cz)\leq p}/\kappa \notag\\
    &=1-\sum_{cz}(\mu(cz)-\tilde\mu(cz))\knuth{U(cz)\leq p}/\kappa \notag\\
    &\geq 1-\sum_{cz}(\mu(cz)-\tilde\mu(cz))/\kappa = 1- (1-w(\tilde\mu))/\kappa \notag\\
    &\geq 1- (1-(1-\epsilon))/\kappa = 1-\epsilon/\kappa.
    \label{eq:weight_nu2}
  \end{align}
  To obtain the last inequality above, we used Eq.~\eqref{eq:weight_mu}.
  Thus $\tilde\nu$ is a subprobability distribution of weight at least
  $1-\epsilon/\kappa$.  We use $\tilde\nu$ to construct the distribution $\nu$
  witnessing the conclusion of the theorem. For each $cz$ we bound
  \begin{align}
    \tilde\nu(cz)/\mu(z|U\leq p) 
    &= \mu(cz)\knuth{U(cz)\geq \mu(c|z)}\knuth{U(cz)\leq p}
    /(\kappa \mu(z|U\leq p))
    \notag\\
    &= \mu(c|z)\knuth{U(cz)\geq \mu(c|z)}\knuth{U(cz)\leq p}/\kappa_{z}
    \notag\\
    &\leq p /\kappa_{z},\label{eq:lm:esupe_fail_esmaxprob1}
  \end{align}
  where in the second step we used Eq.~\eqref{eq:kappa}. 
  Define $\tilde\nu(C|z)$ by $\tilde\nu(c|z)=\tilde\nu(cz)/\mu(z|U\leq p)$, 
  with $\tilde\nu(c|z)=0$ if $\mu(z|U\leq p)=0$, and 
  let $w_{z}=w(\tilde\nu(C|z))$. We show below that $w_{z}\leq 1$, 
  and so the definition of $\tilde\nu(C|z)$ extends the conditional 
  probability notation to the
  subprobability distribution $\tilde{\nu}$ with the understanding
  that the conditionals are with respect to $\mu$ given $\{U\leq
  p\}$. Applying the first two steps of
  Eq.~\eqref{eq:lm:esupe_fail_esmaxprob1} and continuing from there, we
  have
  \begin{align}\label{eq:weight_conditional_nu}
    \tilde\nu(c|z) &= 
    \mu(c|z)\knuth{U(cz)\geq \mu(c|z)}\knuth{U(cz)\leq p}/\kappa_{z}\notag\\
    &\leq \mu(c|z)\knuth{U(cz)\leq p}/\kappa_{z}\notag\\
    &=\mu(c,U\leq p|z)/\mu(U\leq p|z) = \mu(c|z,U\leq p).
  \end{align}
  Since $\mu(C|z,U\leq p)$ is a normalized distribution, 
  the above equation implies that $w_{z}\leq 1$. 
  For each $z$, we have that $\tilde\nu(C|z)\leq p/\kappa_z$ 
  (Eq.~\eqref{eq:lm:esupe_fail_esmaxprob1}), $p/\kappa_{z}\geq p\geq 1/|\Rng(C)|$,  
  and $\mu(C|z,U\leq p)$ dominates $\tilde\nu(C|z)$ (Eq.~\eqref{eq:weight_conditional_nu}). 
  Hence, we can apply Lem.~\ref{lm:tvsub} to obtain
  distributions $\nu_{z}$ of $C$ such that $\nu_{z}\geq
  \tilde\nu(C|z)$, $\nu_{z}\leq p/\kappa_{z}$, and
  $\TV(\nu_{z},\mu(C|z,U\leq p))\leq 1-w_{z}$.  Now we can define
  the distribution $\nu$ of $CZ$ by $\nu(cz)=\nu_{z}(c)\mu(z|U\leq p)$.  
  By Eq.~\eqref{eq:tv_samemarg}, we get
  \begin{align}
    \TV(\nu,\mu(CZ|U\leq p)) &= \sum_{z}\TV(\nu_{z},\mu(C|z,U\leq p))\mu(z|U\leq p)\notag\\
    &\leq \sum_{z}(1-w_{z})\mu(z|U\leq p)\notag\\
    &= 1-\sum_{z}w(\tilde\nu(C|z))\mu(z|U\leq p)\notag\\
    &= 1-\sum_{z}\sum_{c}(\tilde\nu(cz)/\mu(z|U\leq p))\mu(z|U\leq p)\notag\\
    &= 1-w(\tilde\nu)\leq \epsilon/\kappa,\label{eq:lm:esupe_fail_esmaxprob_1}
  \end{align}
  where in the last step we used Eq.~\eqref{eq:weight_nu2}.
  For the average maximum probability of $\nu$, we get
  \begin{align}
    \sum_{z}\max_{c}\nu(c|z)\nu(z) &=
    \sum_{z}\max_{c}\nu_{z}(c)\mu(z|U\leq p)\notag\\
    &\leq p\sum_{z}\mu(z|U\leq p)/\kappa_{z}\notag\\
    &= p\sum_{z}\mu(z)/\kappa = p/\kappa,
  \end{align}
  where to obtain the last line we used Eq.~\eqref{eq:kappa}. The above
  two equations show that for an arbitrary value $e$ of $E$,
  $P^{\epsilon/\kappa_{e}}_{\max,\mu(CZ|e,U\leq p)}(C|Z)\leq p/\kappa_{e}$,
  which together with the argument at the beginning of the proof
  establishes the theorem.
\end{proof}

The above theorem implies Thm.~\ref{thm:smooth_min_entropy_bound} in the main text as a corollary.  
\begin{corollary} \label{cor:smooth_min_entropy_bound} 
Suppose that the distribution $\mu$ of $CZE$ satisfies the chained model $\cH(\cC)$. 
Let $1\geq p\geq 1/|\Rng(C)|$ and $1\geq \kappa', \epsilon>0$. 
  Define $\{\phi\}$ to be the event that $U\leq p$, where $U$ is given in 
  Eq.~\eqref{eq:upe_final}. 
  Let $\kappa'\leq\kappa=\Prob_{\mu}(\phi)$.  Then the smooth conditional 
  min-entropy satisfies
  \begin{equation*}
    H_{\min}^{\epsilon}(C|ZE;\phi) \geq -\log_2(p/\kappa'^{1+1/\beta}). 
  \end{equation*}
\end{corollary}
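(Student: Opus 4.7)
The plan is to reduce Cor.~\ref{cor:smooth_min_entropy_bound} to a direct application of Thm.~\ref{thm:esupe_fail_esmaxprob}, after reparameterizing the probability estimator so that the UPE level and the target smoothness $\epsilon$ stated in the conclusion agree.

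First, I would observe that the event $\{\phi\}=\{U\leq p\}$ with $U=(T_n\epsilon)^{-1/\beta}$ depends on $T_n$, $\epsilon$, $p$, and $\beta$ only through the threshold condition $T_n\geq 1/(p^{\beta}\epsilon)$. Consequently, for any $\tilde\epsilon>0$ and $\tilde p>0$ satisfying $\tilde p^{\beta}\tilde\epsilon = p^{\beta}\epsilon$, defining $\tilde U = (T_n \tilde\epsilon)^{-1/\beta}$ gives $\{\tilde U\leq \tilde p\}=\{\phi\}$. By Thm.~\ref{thm:uest_constr}, this $\tilde U$ is a $\tilde\epsilon$-$\UPE(C|ZE;\cH(\cC))$, since the same test supermartingale $T_n$ certifies the UPE property at any positive level parameter.

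Next, I would set $\tilde\epsilon = \kappa'\epsilon$, which forces $\tilde p = p/\kappa'^{1/\beta}$. Because $\kappa'\leq 1$, we have $\tilde p \geq p\geq 1/|\Rng(C)|$, so the hypotheses of Thm.~\ref{thm:esupe_fail_esmaxprob} apply to $\tilde U$ at threshold $\tilde p$. Writing $\kappa = \Prob_{\mu}(\phi)$, the theorem then yields
\begin{equation*}
P^{\tilde\epsilon/\kappa}_{\max,\mu(CZE|\phi)}(C|ZE) \;\leq\; \tilde p/\kappa \;=\; p/(\kappa\,\kappa'^{1/\beta}).
\end{equation*}

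Finally, I would close the argument with two monotonicity observations. Since $\kappa\geq\kappa'$, the effective error bound satisfies $\tilde\epsilon/\kappa = \kappa'\epsilon/\kappa \leq \epsilon$, so by the monotonicity of the smooth maximum probability in the error bound (noted immediately after Def.~\ref{def:smoothminentropy}),
\begin{equation*}
P^{\epsilon}_{\max,\mu(CZE|\phi)}(C|ZE) \;\leq\; P^{\tilde\epsilon/\kappa}_{\max,\mu(CZE|\phi)}(C|ZE) \;\leq\; p/(\kappa\,\kappa'^{1/\beta}) \;\leq\; p/\kappa'^{1+1/\beta},
\end{equation*}
where the final step uses $\kappa\geq \kappa'$ once more. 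Taking $-\log_2$ yields the claimed lower bound on $H^\epsilon_{\min}(C|ZE;\phi)$. The computation is essentially bookkeeping; the main conceptual point is that Thm.~\ref{thm:esupe_fail_esmaxprob} naturally delivers smoothness $\epsilon/\kappa$ rather than $\epsilon$, so one must absorb the factor of $\kappa$ by tightening the UPE level using the pre-experiment lower bound $\kappa'$. This reparameterization is also why the unknown $\kappa$ can be replaced by the known $\kappa'$ in the final entropy bound.
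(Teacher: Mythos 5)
Your proposal is correct and takes essentially the same route as the paper: reparameterize the probability estimator so that the success event $\{\phi\}$ is unchanged, apply Thm.~\ref{thm:esupe_fail_esmaxprob}, and finish with $\kappa'\leq\kappa$. The only immaterial difference is that the paper rescales by the true $\kappa$ (taking $U'=(T_n\epsilon\kappa)^{-1/\beta}$, which yields smoothness exactly $\epsilon$ and the bound $p/\kappa^{1+1/\beta}$ directly), whereas you rescale by the known lower bound $\kappa'$ and then absorb the mismatch with the monotonicity of $P^{\epsilon}_{\max}$ in the error bound; both arguments are valid.
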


\begin{proof}
We observe that the event that $U\leq p$ is the same as the event that  
$U'\leq p/\kappa^{1/\beta}$, where $U'=(T_{n}\epsilon\kappa)^{-1/\beta}$
and $T_n$ is defined as above Eq.~\eqref{eq:upe_final}. By 
Thm.~\ref{thm:uest_constr}, $U'$ is an $\epsilon\kappa$-$\UPE$. 
In Thm.~\ref{thm:esupe_fail_esmaxprob}, if we replace $U$ and $p$ there by 
$U'$ and $p/\kappa^{1/\beta}$ here, then we obtain 
$P^{\epsilon}_{\max,\mu(CZE|\phi)}(C|ZE)\leq p/\kappa^{1+1/\beta}$. Since
$\kappa'\leq\kappa$, 
we also have $P^{\epsilon}_{\max,\mu(CZE|\phi)}(C|ZE)\leq p/\kappa'^{1+1/\beta}$.
According to the definition of the smooth conditional min-entropy in 
Def.~\ref{def:smoothminentropy}, we get the lower bound in the corollary. 
\end{proof}

We remark that, to obtain uniformly random bits, Cor.~\ref{cor:smooth_min_entropy_bound} 
can be composed directly with ``classical-proof'' strong extractors in a complete protocol  
for randomness generation. The error bounds from the corollary and those of
the extractor compose additively~\cite{knill:2017}. Efficient randomness extractors requiring 
few seed bits exist, see Refs.~\cite{trevisan:2001, mauerer:2012}. Specific instructions for 
ways to apply them for randomness generation can be found in Refs.~\cite{bierhorst:qc2017a,
bierhorst:qc2018a, knill:2017}.

\section{Properties of PEFs} 
\label{subsec:pef_property}

Here we prove the monotonicity of the functions $g(\beta)$ and $\beta g(\beta)$: 
As $\beta$ increases, the rate $g(\beta)$ as defined in Eq.~\eqref{eq:gain_rate}
of the main text is monotonically non-increasing, and $\beta g(\beta)$
is monotonically non-decreasing. These are the consequence of the following lemma:

\begin{lemma}
  \label{lm:power_change}
  If $F$ is a PEF with power $\beta$ for the trial model $\cC$, then for any $0<\gamma\leq 1$, 
  $F$ is a PEF with power $\beta/\gamma$ for $\cC$, and $F^{\gamma}$ is a PEF with power 
  $\gamma\beta$ for $\cC$.  
\end{lemma}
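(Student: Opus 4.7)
The plan is to verify each of the two claims directly against the defining inequality for PEFs, namely that $F$ is a PEF with power $\beta$ for $\cC$ iff $F \geq 0$ and $\Exp_\sigma\bigl(F(CZ)\sigma(C|Z)^\beta\bigr) \leq 1$ for every $\sigma \in \cC$. Both parts reduce to a single one-line estimate once the right monotonicity or concavity property is invoked, so no delicate analysis is required; the only point to watch is to use $0 < \gamma \leq 1$ correctly in each case.

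For the first claim, that $F$ remains a PEF at the larger power $\beta/\gamma$, I will use the fact that $\sigma(C|Z) \in [0,1]$ pointwise, so for any exponent $\alpha \geq \beta$ we have $\sigma(C|Z)^\alpha \leq \sigma(C|Z)^\beta$. Applying this with $\alpha = \beta/\gamma \geq \beta$ (valid because $\gamma \leq 1$) and using $F \geq 0$, I obtain the termwise bound
\begin{equation*}
F(cz)\,\sigma(c|z)^{\beta/\gamma} \;\leq\; F(cz)\,\sigma(c|z)^{\beta},
\end{equation*}
so taking $\Exp_\sigma$ of both sides and invoking the hypothesis yields $\Exp_\sigma\bigl(F(CZ)\sigma(C|Z)^{\beta/\gamma}\bigr) \leq 1$, as required.

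For the second claim, that $F^\gamma$ is a PEF with power $\gamma\beta$, I will apply Jensen's inequality to the concave function $x \mapsto x^\gamma$ on $[0,\infty)$, which is concave precisely because $0 < \gamma \leq 1$. Writing the quantity to be bounded as
\begin{equation*}
\Exp_\sigma\bigl(F(CZ)^\gamma\,\sigma(C|Z)^{\gamma\beta}\bigr) = \Exp_\sigma\Bigl(\bigl(F(CZ)\,\sigma(C|Z)^\beta\bigr)^\gamma\Bigr) \leq \bigl(\Exp_\sigma\bigl(F(CZ)\,\sigma(C|Z)^\beta\bigr)\bigr)^\gamma \leq 1,
\end{equation*}
gives exactly the PEF condition at power $\gamma\beta$; nonnegativity of $F^\gamma$ is immediate from $F \geq 0$.

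The two consequences stated right before the lemma follow at once: the first part says we may freely increase the power, so $g(\beta)$ is non-increasing in $\beta$ (any PEF for the smaller power is also a PEF for the larger power, so the supremum in the definition of $g$ cannot grow); the second part, rescaled, shows that if $F$ witnesses a value of $g(\beta)$ then $F^\gamma$ witnesses a value of $g(\gamma\beta)$ satisfying $\gamma\beta\,g(\gamma\beta) \geq \Exp_\nu\bigl(\log_2 F(CZ)\bigr) = \beta g(\beta)$, giving monotonicity of $\beta g(\beta)$. I do not expect any serious obstacle here — the main subtlety is just keeping track of which direction $\gamma \leq 1$ is used (as an upper bound on the exponent ratio in the first part, and as the concavity range of $x^\gamma$ in the second).
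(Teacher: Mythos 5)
Your proof is correct and follows essentially the same route as the paper: the first claim via the pointwise bound $\sigma(c|z)^{\beta/\gamma}\leq\sigma(c|z)^{\beta}$ for $\sigma(c|z)\in[0,1]$, and the second via Jensen's inequality applied to the concave map $x\mapsto x^{\gamma}$. The remarks on monotonicity of $g(\beta)$ and $\beta g(\beta)$ also match the paper's subsequent discussion.
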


\begin{proof}

  For an arbitrary distribution $\sigma\in\cC$, we have $0\leq \sigma(c|z)\leq 1$ for all $cz$. 
  By the monotonic property of the exponential function $x\mapsto a^{x}$ with $0\leq a\leq 1$, 
  we get that $\sigma(c|z)^{\beta/\gamma}\leq \sigma(c|z)^{\beta}$ for all $cz$. Therefore,
  if a non-negative RV $F$ satisfies that 
  \begin{equation*}
    \sum_{cz}F(cz)\sigma(c|z)^{\beta}\sigma(cz) \leq 1, 
  \end{equation*}
  then 
  \begin{equation*}
  \sum_{cz}F(cz)\sigma(c|z)^{\beta/\gamma}\sigma(cz)\leq \sum_{cz}F(cz)\sigma(c|z)^{\beta}\sigma(cz) \leq 1.
  \end{equation*}
  Hence, if $F$ is a PEF with power $\beta$ for $\cC$, then $F$ 
  is a PEF with power $\beta/\gamma$ for $\cC$. 
  
  On the other hand, by the concavity of the function $x\mapsto x^{\gamma}$ with $0<\gamma\leq 1$, we can 
  apply Jensen's inequality to get
  \begin{align}
    \Exp_{\sigma}\left(F(CZ)^{\gamma}\sigma(C|Z)^{\gamma\beta}\right)
    & = \Exp_{\sigma}\left(\left(F(CZ)\sigma(C|Z)^{\beta}\right)^{\gamma}\right) \notag \\
    & \leq \left(\Exp_{\sigma}\left(F(CZ)\sigma(C|Z)^{\beta}\right)\right)^{\gamma} \notag \\
    & \leq 1, \notag
  \end{align}
  for all distributions $\sigma\in\cC$. Hence $F^{\gamma}$ is a PEF with 
  power $\gamma\beta$ for $\cC$.  
\end{proof}

The property that $\beta g(\beta)$ is monotonically non-decreasing in $\beta$ follows 
directly from Lem.~\ref{lm:power_change} and the definition of $g(\beta)$ in Eq.~\eqref{eq:gain_rate}
of the main text. On the other hand, to prove that $g(\beta)$ is monotonically non-increasing in 
$\beta$, we also need to use the equality that  
\begin{equation*}
 \Exp_{\sigma} \big(\log_2(F^{\gamma}(CZ))/(\gamma\beta)\big)=\Exp_{\sigma} \big(\log_2(F(CZ))/\beta\big).
\end{equation*}

The monotonicity of the function $g(\beta)$ (or $\beta g(\beta)$)
helps to determine the maximum asymptotic randomness rate 
$g_0=\sup_{\beta>0}g(\beta)$ (or the maximum certificate rate 
$\gamma_{\textrm{PEF}}=\sup_{\beta>0}\beta g(\beta)$), as one can
analyze the PEFs with powers $\beta$ only in the limit where $\beta$
goes to $0$ (or where $\beta$ goes to the infinity).

\section{Numerical Optimization of PEFs}
\label{subsec:optimization}

We provide more details here on how to perform the optimizations (such
as the optimization in Eq.~\eqref{eq:opt_pef} of the main text) required to determine
the power $\beta$ and the PEFs $F_i$ to be used at the $i$'th trial.  
We claim that to verify that the PEF $F$ satisfies the first constraint in Eq.~\eqref{eq:opt_pef}  
 of the main text for all $\sigma\in\cC$, it suffices to check this constraint 
 on the extremal members of the convex closure of $\cC$. The claim follows from the 
 next lemma, Carath\'eodory's theorem, and induction on the number of terms in 
 a finite convex combination.
\begin{lemma}\label{lm:convexindist}
  Let $F\geq 0$ and $\beta>0$.  Suppose that the distribution $\sigma$ can be expressed 
  as a convex combination of two distributions: For all $cz$, $\sigma(cz) =
  \lambda\sigma_1(cz)+(1-\lambda)\sigma_2(cz)$ with $\lambda \in [0,1]$. 
  If the distributions $\sigma_1$ and $\sigma_2$ satisfy 
  $\sum_{cz}F(cz)\sigma_i(c|z)^\beta\sigma_i(cz)\le 1$, 
  then $\sigma$ satisfies $\sum_{cz}F(cz)\sigma(c|z)^\beta\sigma(cz)\le 1$.
\end{lemma}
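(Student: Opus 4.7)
My plan is to reduce the convex-combination claim to a pointwise inequality between summands. The trick is to avoid working with the conditional probabilities $\sigma(c|z)$ directly (which are not themselves a convex combination of the $\sigma_i(c|z)$, because the denominators $\sigma_i(z)$ enter nonlinearly), and instead to rewrite each summand in the constraint as
\[ F(cz)\,\sigma(c|z)^{\beta}\sigma(cz) \;=\; F(cz)\,\frac{\sigma(cz)^{\beta+1}}{\sigma(z)^{\beta}}, \]
so that only joint probabilities and their marginals appear. Both $\sigma(cz)$ and $\sigma(z)=\sum_{c'}\sigma(c'z)$ are \emph{literal} convex combinations, with weights $\lambda$ and $1-\lambda$, of the corresponding quantities for $\sigma_{1}$ and $\sigma_{2}$. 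Since $F(cz)\ge 0$, summing over $cz$ reduces the task to establishing the pointwise inequality
\[ \frac{\sigma(cz)^{\beta+1}}{\sigma(z)^{\beta}} \;\le\; \lambda\,\frac{\sigma_{1}(cz)^{\beta+1}}{\sigma_{1}(z)^{\beta}} \;+\; (1-\lambda)\,\frac{\sigma_{2}(cz)^{\beta+1}}{\sigma_{2}(z)^{\beta}} \]
for each $(c,z)$, after which summing against the non-negative weights $F(cz)$ and invoking the two hypotheses gives $\sum_{cz}F(cz)\sigma(c|z)^{\beta}\sigma(cz)\le \lambda\cdot 1+(1-\lambda)\cdot 1=1$.

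The main obstacle is justifying the pointwise inequality, which amounts to joint convexity of the bivariate function $\Phi(p,q)=p^{\beta+1}/q^{\beta}$ on $[0,\infty)\times(0,\infty)$. I would invoke the standard perspective-function construction: since $\beta>0$ makes $h(x)=x^{\beta+1}$ convex on $[0,\infty)$, its perspective $\Phi(p,q)=q\,h(p/q)$ is jointly convex in $(p,q)$ for $q>0$. Applying joint convexity to $(p,q)=(\sigma(cz),\sigma(z))$ written as a mixture of $(\sigma_{1}(cz),\sigma_{1}(z))$ and $(\sigma_{2}(cz),\sigma_{2}(z))$ with weights $\lambda$ and $1-\lambda$ produces the displayed inequality.

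Finally I would address the degenerate boundary: if $\sigma_{i}(z)=0$ for some $i$, then $\sigma_{i}(cz)=0$ for every $c$, and under the convention in the notation section that zero-probability conditioners drop out of expectation sums the corresponding term is simply absent, reducing to the non-degenerate case for the remaining index. If $\sigma(z)=0$ then both $\sigma_{1}(z)=\sigma_{2}(z)=0$ and the left-hand side also vanishes, so there is nothing to prove. I expect this last bookkeeping to be short but necessary for completeness; the real content is the perspective-function convexity step.
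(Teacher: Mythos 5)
Your proposal is correct and takes essentially the same route as the paper's proof: both reduce the claim, via $F(cz)\geq 0$, to the pointwise inequality $\sigma(c|z)^{\beta}\sigma(cz)\leq\lambda\,\sigma_{1}(c|z)^{\beta}\sigma_{1}(cz)+(1-\lambda)\,\sigma_{2}(c|z)^{\beta}\sigma_{2}(cz)$, treat the zero-marginal cases separately, and then establish the inequality as a convexity statement about $(p,q)\mapsto p^{1+\beta}/q^{\beta}$. The only difference is in how that convexity is justified: the paper verifies it by computing $f''(\lambda)\geq 0$ explicitly along the segment joining $(\sigma_{1}(cz),\sigma_{1}(z))$ and $(\sigma_{2}(cz),\sigma_{2}(z))$, whereas you invoke the standard joint convexity of the perspective $q\,h(p/q)$ of $h(x)=x^{1+\beta}$ --- the same fact, cited rather than calculated.
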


\begin{proof}
We start by proving that for every $cz$, the following inequality holds:
\begin{equation}\label{eq:convbeta}
\sigma(c|z)^\beta\sigma(cz) \le \lambda\sigma_1(c|z)^\beta\sigma_1(cz)+(1-\lambda)\sigma_2(c|z)^\beta\sigma_2(cz).
\end{equation}
If $\sigma_1(z)=\sigma_2(z)=0$, we recall our convention that
probabilities conditional on $z$ are zero, and so for every $c$,
$\sigma_1(c|z)=\sigma_2(c|z)=\sigma(c|z)=0$. Hence,
Eq.~\eqref{eq:convbeta} holds immediately (as an equality). If
$\sigma_1(z)=0<\sigma_2(z)$, then for every $c$, $\sigma_1(c|z)=0$ and
$\sigma(cz)=(1-\lambda)\sigma_2(cz)$. In this case, one can verify
that Eq.~\eqref{eq:convbeta} holds. By symmetry, 
Eq.~\eqref{eq:convbeta} also holds in the case that 
$\sigma_2(z)=0<\sigma_1(z)$.  Now consider the case that 
$\sigma_1(z)>0$ and $\sigma_2(z)>0$. Let $x_i=\sigma_i(cz)$ and 
$y_i = \sigma_i(z)$, and consider the function
\begin{equation*}
f(\lambda) = (\lambda x_1 + (1-\lambda) x_2)^{1+\beta}(\lambda y_1 +(1-\lambda)y_2)^{-\beta},
\end{equation*}
so $f(0)=\sigma_2(c|z)^\beta\sigma_2(cz)$, $f(1)=\sigma_1(c|z)^\beta\sigma_1(cz)$,  and $f(\lambda) = \sigma(c|z)^\beta\sigma(cz)$. If we can show that $f(\lambda)$ is convex in $\lambda$ on the interval $[0,1]$, Eq.~\eqref{eq:convbeta}  will follow. Since $f(\lambda)$ is continuous for $\lambda\in[0,1]$ and smooth for $\lambda\in(0,1)$, it suffices to 
show that $f''(\lambda)\geq 0$ as follows:
  \begin{align*}
    f'(\lambda) &= (\lambda x_1+(1-\lambda)x_{2})^{\beta}(\lambda y_1+(1-\lambda)y_2)^{-\beta-1}\notag\\
    &\hphantom{=\;\;} \times \Big( (1+\beta) (x_{1}-x_{2})(\lambda y_1+(1-\lambda)y_{2}) + 
    (-\beta) (\lambda x_{1}+(1-\lambda)x_{2})(y_1-y_{2}) \Big)\label{eq:thm:prob_constr_2}\\
    f''(\lambda) &= 
    (\lambda x_{1}+(1-\lambda)x_{2})^{\beta-1}(\lambda y_1+(1-\lambda)y_{2})^{-\beta-2}\notag\\
    &\hphantom{=\;\;} \times \Big( \beta(1+\beta) (x_{1}-x_{2})^{2}(\lambda y_1+(1-\lambda)y_{2})^{2} 
    \notag\\
    &\hphantom{=\;\;} \hphantom{\times\Big(} + 2(-\beta)(1+\beta) (x_{1}-x_{2})(y_1-y_{2})(\lambda x_{1}+(1-\lambda) x_{2})(\lambda y_1+(1-\lambda)y_{2})\notag\\
    &\hphantom{=\;\;} \hphantom{\times\Big(} + (-\beta)(-1-\beta) (y_1-y_{2})^{2}(\lambda x_{1}+(1-\lambda )x_{2})^{2}\Big)\notag\\
    &= (\lambda x_{1}+(1-\lambda)x_{2})^{\beta-1}(\lambda y_1+(1-\lambda)y_{2})^{-\beta-2}\notag\\
    &\hphantom{=\;\;} \times \beta(1+\beta)\Big((x_{1}-x_{2})(\lambda y_1+(1-\lambda) y_{2}) -
    (y_1-y_{2})(\lambda x_{1}+(1-\lambda) x_{2})\Big)^{2},
  \end{align*}
  which is a non-negative multiple of a square. Having demonstrated Eq.~\eqref{eq:convbeta}, we can complete the proof of the lemma as follows:
 \begin{eqnarray*}
 \sum_{cz}F(cz)\sigma(c|z)^\beta\sigma(cz) &\le& \sum_{cz}F(cz)\left[\lambda\sigma_1(c|z)^\beta\sigma_1(cz)+(1-\lambda)\sigma_2(c|z)^\beta\sigma_2(cz)\right]\\
 &=&\lambda \sum_{cz}F(cz)\sigma_1(c|z)^\beta\sigma_1(cz)+(1-\lambda)\sum_{cz}F(cz)\sigma_2(c|z)^\beta\sigma_2(cz)\\
 &\le& \lambda \times 1 + (1-\lambda)\times 1\\
 &=&1.
 \end{eqnarray*}
\end{proof}

 Suppose that the trial model $\cC$ is a convex polytope with a finite number of extremal distributions 
 $\sigma_k(CZ)$, $k=1,2,...,K$. In view of the claim before Lem.~\ref{lm:convexindist}, the 
 optimization problem in Eq.~\eqref{eq:opt_pef} of the main text is equivalent to %
\begin{equation}
  \begin{array}[b]{ll}
    \textrm{Max:}& n\Exp_{\nu} \log_2(F(CZ))/\beta+\log_2(\epsilon)/\beta \\ 
    \textrm{With:}& \sum_{cz}F(cz)\sigma_k(c|z)^{\beta}\sigma_k(cz) \leq 1, k=1,2,...,K, \\
    & F(cz)\geq 0, \forall cz.
  \end{array}\label{eq:opt_pef_simplified}
\end{equation}
Given the values of $n$, $\beta$, $\epsilon$, $\nu$, and $\sigma_k$ with $k=1,2,...,K$, 
the objective function in Eq.~\eqref{eq:opt_pef_simplified} is a concave function of $F(CZ)$, and 
each constraint on $F(CZ)$ is linear. Hence, the above optimization problem can be solved by any 
algorithm capable of optimizing nonlinear functions with linear constraints on the arguments. 
In our implementation, we use sequential quadratic programming. Due to numerical imprecision, it 
is possible that the returned numerical solution does not satisfy the first constraint in
Eq.~\eqref{eq:opt_pef_simplified} and the corresponding PEF is not valid. In this case, we can multiply 
the returned numerical solution by a positive factor smaller than 1, whose value is given by the reciprocal
of the largest left-hand side of the above first constraint at the extremal distributions $\sigma_k(CZ)$, 
$k=1,2,...,K$.  Then, the re-scaled solution is a valid PEF. We remark that if the trial model $\cC$ is
not a convex polytope but there exists a good approximation $\cC\subseteq\cD$ with 
$\cD$ a convex polytope, then we can enlarge the model to $\cD$ for an effective method to 
determine good PEFs.  

Consider device-independent randomness generation (DIRG) in the CHSH
Bell-test configuration~\cite{clauser:qc1969a} with inputs $Z=XY$
and outputs $C=AB$, where $A, B, X, Y\in\{0,1\}$. If the input
distribution $\Prob(XY)$ is fixed with $\Prob(xy)>0$ for all $xy$, then we 
need to characterize the set of input-conditional output distributions
$\Prob(AB|XY)$. If we consider all distributions $\Prob(AB|XY)$
satisfying non-signaling conditions~\cite{PRBox}, then the associated trial model 
$\cC$ is the non-signaling polytope, which is convex and has $24$ extreme
points~\cite{barrett:2005}.  If we consider only the distributions
$\Prob(AB|XY)$ achievable by quantum mechanics, then the associated 
trial model is a proper convex subset of the above non-signaling polytope.  
The quantum set has an infinite number of extreme points.
In our analysis of the Bell-test results reported in Refs.~\cite{pironio:2010, shalm:2015}, 
we simplified the problem by considering instead the set of distributions 
$\Prob(AB|XY)$ satisfying non-signaling conditions~\cite{PRBox} and Tsirelson's bounds~\cite{Tsirelson:1980}, 
which includes all the distributions $\Prob(AB|XY)$ achievable by quantum mechanics. For a fixed input 
distribution $\Prob(XY)$ with $\Prob(xy)>0$ for all $xy$, the associated trial model $\cC$ is a convex polytope 
with $80$ extreme points~\cite{knill:2017}.  If the input distribution $\Prob(XY)$ is not fixed but is 
contained in a convex polytope, the associated trial model $\cC$ is still a convex polytope (see Ref.~\cite{knill:2017} for more details).  Therefore, for DIRG based on the CHSH Bell test~\cite{clauser:qc1969a}, the optimizations 
for determining the power $\beta$ and the PEFs $F_i$ can be expressed in the form in 
Eq.~\eqref{eq:opt_pef_simplified} and hence solved effectively.

\section{Relationship between Certificate Rate and Statistical Strength} 
\label{subsec:certificate_rate}

We prove that for DIRG in the CHSH Bell-test configuration, the maximum certificate rate
$\gamma_{\textrm{PEF}}$ witnessed by PEFs at a distribution $\nu$ of trial results is
equal to the statistical strength of $\nu$ for rejecting local realism
as studied in Refs.~\cite{vanDam:2005, acin2005, zhang:2010}.  To prove this, we
first simplify the optimization problem for determining
$\gamma_{\textrm{PEF}}$. Then, we show that the simplified
optimization problem is the same as that for determining the
statistical strength. The argument generalizes to any convex-polytope
  model whose extreme points are divided into the following 
  two classes: 1) classical deterministic
  distributions satisfying that given the inputs, the outputs are
  deterministic (here we require that for every $cz$ there exists 
  a distribution in the model where the outcome is $c$ given $z$), 
  and 2) distributions that are
  completely non-deterministic in the sense that for no input is the
  output deterministic. The argument further generalizes to models 
  contained in such a model, provided it includes all of the classical
  deterministic distributions of the outer model. 

In order to determine $\gamma_{\textrm{PEF}}=\sup_{\beta>0} \beta g(\beta)$, 
considering the monotonicity of the function $\beta g(\beta)$ proved 
in Sect.~\ref{subsec:pef_property} and the definition of $ g(\beta)$ 
in Eq.~\eqref{eq:gain_rate} of the main text, we need to solve the following 
optimization problem at arbitrarily large powers $\beta$:
\begin{equation}
  \begin{array}[b]{ll}
    \textrm{Max:}& \Exp_{\nu} \big(\log_2(F(CZ))\big) \\ 
    \textrm{With:}& \sum_{cz}F(cz)\sigma(c|z)^{\beta}\sigma(cz) \leq 1
    \textrm{\ for all $\sigma\in\cC$}, \\
    & F(cz)\geq 0, \textrm{\ for all $cz$}.
  \end{array}\label{eq:gamma_opt1}
\end{equation}
To simplify this optimization, we first consider the case that the
trial model $\cC$ is the set of non-signaling distributions with a 
fixed input distribution $\Prob(Z)$ where $\Prob(z)>0$ for all $z$.  
The model $\cC$ is a convex polytope and has $24$ 
extremal distributions~\cite{barrett:2005}, among which there
are $16$ deterministic local realistic distributions, denoted by
$\sigma_{\text{LR}_i}$, $i=1,2,...,16$, and $8$ variations of the
Popescu-Rohrlich (PR) box~\cite{PRBox}, denoted by
$\sigma_{\text{PR}_j}$, $j=1,2,...,8$. According to the discussion in
Sect.~\ref{subsec:optimization}, the optimization problem in
Eq.~\eqref{eq:gamma_opt1} is equivalent to
\begin{equation}
  \begin{array}[b]{ll}
    \textrm{Max:}& \Exp_{\nu} \big(\log_2(F(CZ))\big) \\ 
    \textrm{With:}& \sum_{cz}F(cz)\sigma_{\text{LR}_i}(cz) \leq 1, \forall i,\\ 
     & \sum_{cz}F(cz)\sigma_{\text{PR}_j}(c|z)^{\beta}\sigma_{\text{PR}_j}(cz) \leq 1, \forall j,\\
     & F(cz)\geq 0, \textrm{\ for all $cz$},
  \end{array}\label{eq:gamma_opt2}
\end{equation}
where we used the fact that $\sigma_{\text{LR}_i}(c|z)$ is either $0$ or $1$. 
Only the second constraint in Eq.~\eqref{eq:gamma_opt2} depends on the power $\beta$. 
The distributions $\sigma_{\text{PR}_{j}}$ satisfy that $\sigma_{\text{PR}_{j}}(c|z)<1$ 
for all $cz$. Hence $\sigma_{\text{PR}_j}(c|z)^{\beta}\rightarrow 0$ for all $cz$ as 
$\beta\rightarrow\infty$. Because there are finitely many constraints and values of 
$cz$, the second constraint becomes irrelevant for sufficiently large $\beta$.  
  Let $\beta_{\textrm{th}}^{\textrm{NS}}$ be the minimum $\beta$ for
  which the second constraint is implied by the first.  The threshold 
  $\beta_{\textrm{th}}^{\textrm{NS}}$ is independent of the specific 
  input distribution. To see this, the last factors in the sums on 
  the left-hand sides of the constraints in Eq.~\eqref{eq:gamma_opt2}
  are of the form $\sigma(cz)$, which can be written as $\sigma(c|z)\sigma(z)$ 
  with a fixed $\sigma(z)$.    We can define 
  $\tilde F(cz)=F(cz)\sigma(z)$ and optimize over $\tilde F$ instead, 
  thus eliminating the fixed input distribution from the problem.
  Then the first constraint on $\tilde F$ implies that 
  $\sum_{cz}\tilde F(cz)\sum_{i}\sigma_{\text{LR}_i}(c|z)\leq 16$.  
  Since $\sum_{i}\sigma_{\text{LR}_{i}}(c|z) \geq 4$ for each $cz$, this
  constraint implies the second provided that 
  $\sigma_{\text{PR}_j}(c|z)^{1+\beta}\leq 1/4$,
  which holds for each $j$ and $cz$ for sufficiently large $\beta$. 
  Particularly, since $\sigma_{\text{PR}_j}(c|z)$ is either $0$ or 
  $1/2$~\cite{barrett:2005}, we obtain that $\beta_{\textrm{th}}^{\textrm{NS}}\leq 1$. 
  Furthermore, by numerical optimization for a sample of large-enough $\beta$ 
  we find that $\beta_{\textrm{th}}^{\textrm{NS}}\approx 0.4151$.
  Therefore, when $\beta\geq\beta_{\textrm{th}}^{\textrm{NS}}$ the 
  optimization problem in Eq.~\eqref{eq:gamma_opt2} is independent 
  of $\beta$ and becomes
\begin{equation}
  \begin{array}[b]{ll}
    \textrm{Max:}& \Exp_{\nu} \big(\log_2(F(CZ))\big) \\ 
    \textrm{With:}& \sum_{cz}F(cz)\sigma_{\text{LR}_i}(cz) \leq 1, \forall i,\\   
     & F(cz)\geq 0, \textrm{\ for all $cz$}.
  \end{array}\label{eq:gamma_opt3}
\end{equation}
This optimization problem is identical to the one for designing the optimal 
test factors for the hypothesis test of local realism~\cite{zhang:2011, zhang:2013,
  knill:qc2014a}. In Ref.~\cite{zhang:2011} it is proven that the optimal value of 
  the optimization problem in Eq.~\eqref{eq:gamma_opt3} is equal to the
  statistical strength for rejecting local realism~\cite{vanDam:2005, acin2005, zhang:2010}, 
  which is defined as
\begin{equation*}
s=\min_{\sigma_{\textrm{LR}}} D_{\textrm{KL}}(\nu|\sigma_{\textrm{LR}}). 
\end{equation*}
Here, $\sigma_{\textrm{LR}}$ is an arbitrary local realistic distribution
and $D_{\textrm{KL}}(\nu|\sigma_{\textrm{LR}})$ is the Kullback-Leibler
divergence from $\sigma_{\textrm{LR}}$ to
$\nu$~\cite{KL_divergence}. Therefore, when $\beta\geq
\beta_{\textrm{th}}^{\textrm{NS}}$ we have $\beta
g(\beta)=s$. Considering that the function $\beta g(\beta)$ is
monotonically non-decreasing in $\beta$, we have shown that
\begin{equation*}
\gamma_{\textrm{PEF}}=\sup_{\beta>0} \beta g(\beta)=s.
\end{equation*} 


Now we consider the case where the trial model $\cC$ is the set of 
quantum-achievable distributions with a fixed input distribution 
$\Prob(Z)$ where $\Prob(z)>0$ for all $z$. Since the set of 
quantum-achievable distributions is a proper subset of the non-signaling 
polytope, the constraints on $F(CZ)$ imposed by quantum-achievable 
distributions are a subset of the constraints imposed by non-signaling
distributions. Moreover, the set of quantum-achievable distributions
contains all local realistic distributions. Therefore, in the quantum
case, when $\beta\geq \beta_{\textrm{th}}^{\textrm{NS}}$, the constraints
on $F(CZ)$ are also implied by the constraints associated with the 
local realistic distributions.  Consequently the maximum certificate rate 
$\gamma_{\textrm{PEF}}$ is also equal to the statistical strength $s$. 
 We remark that as a consequence, if we set $\beta_{\textrm{th}}^{\textrm{QM}}$ 
 to be the threshold such that when $\beta\geq \beta_{\textrm{th}}^{\textrm{QM}}$ 
 all quantum constraints on $F(CZ)$ are implied by those imposed by the local 
 realistic distributions, then $\beta_{\textrm{th}}^{\textrm{QM}}\leq
  \beta_{\textrm{th}}^{\textrm{NS}}$.

We remark that $\beta_0=\inf\{\beta|\beta g(\beta)=s\}$ is typically strictly 
less than $\beta_{\textrm{th}}^{\textrm{NS}}$ and depends on both the distribution 
$\nu$ as well as the trial model $\cC$. One way to understand this behavior is as
follows: When $\beta<\beta_{\textrm{th}}^{\textrm{NS}}$, the second constraint in 
Eq.~\eqref{eq:gamma_opt2} is relevant; however, if $\beta$ is still large enough, 
it is possible that the constraint does not affect the optimal solution of the 
optimization problem~\eqref{eq:gamma_opt2}. By numerical optimization, we find that 
for the CHSH Bell-test configuration $\beta_0$ is typically less than $0.2$ 
when the trial model $\cC$ includes all non-signaling distributions with the uniform 
distribution for inputs.

\section{Analytic Expressions for Asymptotic Randomness Rates} 
\label{subsec:gain_rate}
In this section we derive the asymptotic randomness rates for the
  trial model consisting of non-signaling distributions according
to two different methods for DIRG protocol based on the CHSH Bell 
test~\cite{clauser:qc1969a}.  We first consider
the maximum asymptotic rate $g_0$ witnessed by PEFs. Then, we derive the
single-trial conditional min-entropy for comparison.  

Suppose that the distribution of each trial's inputs $XY$ and outputs
$AB$ is $\nu(ABXY)\in \cC$, where $\cC$ is the model for each
trial. The maximum asymptotic rate $g_0$ is equal to the worst-case
conditional entropy that is consistent with the distribution
$\nu(ABXY)$~\cite{knill:2017}. That is, the rate $g_0$ is given 
by the following minimization:
\begin{equation}
g_0=\min_{\sigma}\left\{H_{\sigma}(AB|XYE): \sigma(ABXY)=\nu(ABXY)\right\},
\label{eq:analytic_g1}
\end{equation}
where $\sigma$ is the joint distribution of $A, B, X, Y$ and $E$, and $\sigma(ABXY)$ 
is its marginal. By the assumption that the value space of $E$ is countable, 
we can also express the above minimization as  
\begin{equation}
g_0=\min_{\omega_e,\sigma_e} \big\{\sum_{e}\omega_{e}H_{\sigma_e}(AB|XY,E=e):\forall e, \sigma_{e}\in\cC\text{ and } \omega_{e}\geq 0,
\sum_{e}\omega_{e}=1, \sum_{e}\omega_{e}\sigma_{e}=\nu \big\},
\label{eq:analytic_g2}
\end{equation}
where $\sigma_e$ is the distribution of $A, B, X$ and $Y$ conditional on $E=e$ according to $\sigma$, and 
$\omega_e$ is the probability of the event $E=e$.  By the concavity of the conditional entropy, if any of 
the $\sigma_e$ contributing to the sum in Eq.~\eqref{eq:analytic_g2}
is not extremal in $\cC$, we can replace it by a convex combination of extremal distributions 
to decrease the value of the sum.  Thus, we only have to consider extremal distributions 
in the above minimization. 

  For the rest of this section we let $\cC$ consist of 
  non-signaling distributions for the CHSH Bell-test configuration
  with a fixed input distribution $\Prob(XY)$ where
  $\Prob(xy)>0$ for all $xy$.  As explained 
  in the previous section, $\cC$ is a convex polytope with $24$ extreme points.  
  Considering the argument below Eq.~\eqref{eq:analytic_g2}, the number of 
  terms in the sum of Eq.~\eqref{eq:analytic_g2} is at most $24$. 
  As in the previous section, we can divide the $24$ extreme points into the two
  classes consisting of the $16$ deterministic local realistic distributions 
  $\sigma_{\text{LR}_i}$, $i=1,2,...,16$, and the $8$
  variations of the PR box $\sigma_{\text{PR}_j}$, $j=1,2,...,8$. Because the 
  $\sigma_{\text{LR}_{i}}$ are deterministic conditional on the inputs, if
  $\sigma_{e}=\sigma_{\text{LR}_{i}}$ then the conditional entropy satisfies 
  $H_{\sigma_{\text{LR}_i}}(AB|XY,E=e)=0$.  For each PR box $\sigma_{\text{PR}_j}$, 
  the conditional probabilities $\sigma_{\text{PR}_j}(AB|XY)$ are either $0$ or $1/2$~\cite{barrett:2005}.  
  Thus, if $\sigma_{e}=\sigma_{\text{PR}_j}$, the conditional entropy
  satisfies $H_{\sigma_{\text{PR}_j}}(AB|XY, E=e)=1$. Hence, the minimization
  problem in Eq.~\eqref{eq:analytic_g2} becomes
\begin{equation}
  \begin{array}[b]{lll}
    g_0=&\textrm{Min:} & \sum_{j}\omega_{\text{PR}_j} \\ 
    &\textrm{With:}& \omega_{\text{LR}_i}, \omega_{\text{PR}_j}\geq 0, \forall i, j, \\
    & & \sum_{i}\omega_{\text{LR}_i}+\sum_{j}\omega_{\text{PR}_j}=1,\\
    & & \sum_{i}\omega_{\text{LR}_i}\sigma_{\text{LR}_i}+\sum_{j}\omega_{\text{PR}_j}\sigma_{\text{PR}_j}=\nu.
  \end{array}\label{eq:analytic_g3}
\end{equation}
We need to find the minimum total probability of PR boxes in a 
representation of the distribution $\nu$ as a convex combination of 
the $16$ local realistic distributions and the $8$ PR boxes. To help 
solve this problem, we consider the violation of the CHSH Bell 
inequality~\cite{clauser:qc1969a}. Recall that there is only one PR 
box that can violate a particular CHSH Bell inequality 
$\Exp(I_{\text{CHSH}})\leq 2$~\cite{barrett:2005}, where $I_{\text{CHSH}}$
is the CHSH Bell function
\begin{equation} \label{eq:CHSH_function}
I_{\text{CHSH}}(ABXY)=(1-2XY)(-1)^{A+B}/\Prob(XY),
\end{equation}
and $A, B, X, Y\in\{0,1\}$.  Let $\sigma_{\text{PR}_{1}}$ be the
  violating PR box.  The expectation of $I_{\text{CHSH}}$ 
according to $\sigma_{\text{PR}_{1}}$ is maximal, that is,
$\Exp_{\sigma_{\text{PR}_{1}}}(I_{\text{CHSH}})=4$.  
Without loss of generality, $\hat I = \Exp_{\nu}(I_{\text{CHSH}})>2$. 
The probability $\omega_{\text{PR}_1}$ in the convex decomposition of $\nu$
satisfies the inequality
$4\omega_{\text{PR}_1}+(1-\omega_{\text{PR}_1})2\geq \hat I$, or equivalently,
$\omega_{\text{PR}_1}\geq (\hat I-2)/2$.  Hence, according to
Eq.~\eqref{eq:analytic_g3}, we have $g_0\geq (\hat I-2)/2$.

We next show that $g_0\leq (\hat I-2)/2$. For this, we directly use the result of Ref.~\cite{bierhorst:2016}. 
According to Ref.~\cite{bierhorst:2016}, for any non-signaling distribution $\sigma(ABXY)$, if 
$\Exp_{\sigma}( I_{\text{CHSH}})>2$, then the distribution $\sigma(ABXY)$ can be decomposed as 
$\sigma(ABXY)= \omega_{\text{PR}_1} \sigma_{\text{PR}_1}+\sum_{i}\omega_{\text{LR}_i}\sigma_{\text{LR}_i}$, 
where $\omega_{\text{PR}_1}=(\Exp_{\sigma}( I_{\text{CHSH}})-2)/2$, $\omega_{\text{LR}_i}\geq 0$,  
and $\sum_{i}\omega_{\text{LR}_i}=1-\omega_{\text{PR}_1}$.
Specializing to the distribution $\nu(ABXY)$, we get that $g_0\leq (\hat I-2)/2$ for $\hat{I}>2$. 

The arguments above show that given $\hat I>2$, 
the maximum asymptotic randomness rate witnessed by PEFs is 
\begin{equation}
g_0=(\hat I-2)/2,
\label{eq:gain_pef}
\end{equation}
independent of the particular distribution $\nu$ realizing $\hat I$.  

We also numerically evaluated the maximum asymptotic rate
according to $g_0=\sup_{\beta>0} g(\beta)$ with
$g(\beta)$ given by Eq.~\eqref{eq:gain_rate} of the main text. 
The numerical results are presented in
Fig.~\ref{fig:optimality}, which are consistent with the analytic
expression in Eq.~\eqref{eq:gain_pef}.

\begin{figure}
  \begin{center}
    \includegraphics[scale=0.54,viewport=4.5cm 9cm 16.5cm 20cm]{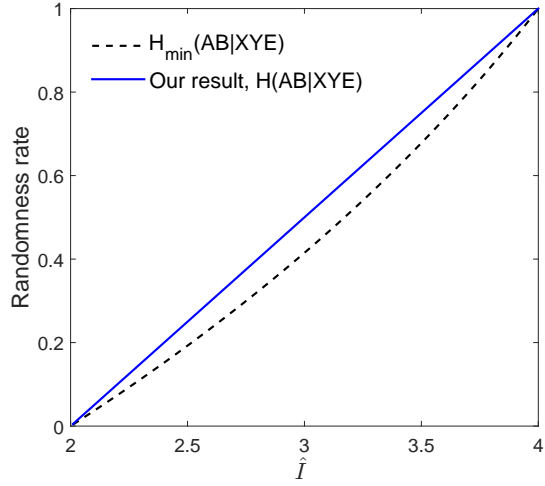}    
  \end{center}
  \caption{Asymptotic randomness rates as a function of $\hat{I}$. 
    Results according to both
    our method (the solid curve) and Refs.~\cite{pironio:2013,
      pironio:2010, fehr:2013, Acin:2012, nieto:2014, bancal:2014,
      NietoSilleras:qc2016} (the dashed curve) are shown.  Our method
    witnesses the maximum asymptotic rate $H(AB|XYE)$, which
    is the worst-case conditional entropy.}
  \label{fig:optimality}
\end{figure}

Next, we consider the quantification of the
asymptotic randomness rate by the single-trial conditional min-entropy
$H_{\min}(AB|XYE)$, which is a lower bound and is 
studied in Refs.~\cite{pironio:2013, pironio:2010, fehr:2013,
  Acin:2012, nieto:2014, bancal:2014, NietoSilleras:qc2016}.  The
single-trial conditional min-entropy is defined by
\begin{equation}
H_{\min}(AB|XYE)=-\log_2(P_{\mathrm{guess}}(AB|XYE)),
\label{eq:min-entropy}
\end{equation}
where $P_{\mathrm{guess}}(AB|XYE)$ is the average guessing probability of the output $AB$ 
given the input $XY$ and the side information $E$, as defined in Ref.~\cite{bancal:2014}. 
According to Refs.~\cite{nieto:2014, bancal:2014}, the guessing probability at $xy$ is given 
by the following maximization:
\begin{equation}
P_{\mathrm{guess}}(AB|xyE)=\max_{\omega_e, \sigma_e}\left\{\sum_{e}\omega_e \max_{ab}\sigma_e(ab|xy):\forall e, \sigma_{e}\in\cC \text{ and }\omega_{e}\geq 0, \sum_{e}\omega_{e}=1,\sum_{e}\omega_{e}\sigma_{e}=\nu \right\}.
\label{eq:analytic_guessingprob1}
\end{equation}
If a $\sigma_e$ contributing to the sum in Eq.~\eqref{eq:analytic_guessingprob1} 
is not extremal in the set $\cC$, we can replace it by a convex
combination of extremal distributions to increase the value of the
sum.  Thus, we only have to consider extremal distributions $\sigma_e$ in
the above maximization.  Applying the argument that led from 
Eq.~\eqref{eq:analytic_g2} to Eq.~\eqref{eq:analytic_g3}, we obtain 
\begin{equation}
  \begin{array}[b]{lll}
    P_{\mathrm{guess}}(AB|xyE)=&\textrm{Max:} & \sum_{i}\omega_{\text{LR}_i}+\frac{1}{2}\sum_{j}\omega_{\text{PR}_j} \\ 
    & \textrm{With:}& \omega_{\text{LR}_i}, \omega_{\text{PR}_j}\geq 0, \forall i, j, \\
    & & \sum_{i}\omega_{\text{LR}_i}+\sum_{j}\omega_{\text{PR}_j}=1,\\
    & & \sum_{i}\omega_{\text{LR}_i}\sigma_{\text{LR}_i}+\sum_{j}\omega_{\text{PR}_j}\sigma_{\text{PR}_j}=\nu.
  \end{array}\label{eq:analytic_guessingprob2}
\end{equation}
Since
$\sum_{i}\omega_{\text{LR}_i}+\frac{1}{2}\sum_{j}\omega_{\text{PR}_j}=1-\frac{1}{2}\sum_{j}\omega_{\text{PR}_j}$,
we only need to minimize the total probability of PR boxes
$\sum_{j}\omega_{\text{PR}_j}$ in the convex decomposition of the
distribution $\nu$. From the derivation of $g_0$ that
  gave Eq.~\eqref{eq:gain_pef}, we conclude that
$\min \left(\sum_{j}\omega_{\text{PR}_j}\right)=(\hat
I-2)/2$ for $\hat{I}>2$. Therefore $P_{\mathrm{guess}}(AB|xyE)=(6-\hat I)/4$ regardless
of the particular input $xy$. Furthermore, the specific convex
decomposition over $E$ that achieves the maximum in 
Eq.~\eqref{eq:analytic_guessingprob2} is the same for all the possible 
inputs. Hence we also have $P_{\mathrm{guess}}(AB|XYE)=(6-\hat I)/4$
independent of the input distribution.  Therefore the
single-trial conditional min-entropy is
\begin{equation}
H_{\min}(AB|XYE)=-\log_2((6-\hat I)/4),
\end{equation}
which is plotted in Fig.~\ref{fig:optimality}. 

The results of this section are summarized in the following theorem:
\begin{theorem}\label{thm:gain_rate}
  Suppose that the trial model $\cC$ consists of non-signaling distributions
  with a fixed input distribution $\Prob(XY)$ where $\Prob(xy)>0$ for all $xy$. 
  For any $\nu\in\cC$,  both the maximum asymptotic randomness rate $g_0$ 
  witnessed by PEFs and the single-trial conditional min-entropy
  $H_{\min}(AB|XYE)$ depend only on $\hat{I}=\Exp_{\nu}(I_{\text{CHSH}})>2$ and 
  are given by $g_0=(\hat I-2)/2$ and $H_{\min}(AB|XYE)=-\log_2((6-\hat I)/4)$.
\end{theorem}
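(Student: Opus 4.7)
The plan is to reduce both quantities to the same linear program over convex decompositions of $\nu$ into the $24$ extreme points of $\cC$ (the $16$ local deterministic distributions $\sigma_{\text{LR}_i}$ and the $8$ PR boxes $\sigma_{\text{PR}_j}$), and then to show that the optimum is controlled by the total PR-box weight $W = \sum_j \omega_{\text{PR}_j}$, whose minimum over admissible decompositions is exactly $(\hat I - 2)/2$.

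First, for $g_0$, I would invoke the characterization $g_0 = \min_{\sigma}\{H_\sigma(AB|XYE) : \sigma(ABXY) = \nu\}$ from Ref.~\cite{knill:2017}, together with concavity of the conditional entropy in the trial distribution, which allows us to restrict the convex decomposition of $\nu$ over $E$ to the extreme points of $\cC$. Since each $\sigma_{\text{LR}_i}$ is deterministic conditional on $XY$, it contributes $0$ to $H_\sigma(AB|XY, E=e)$, while each $\sigma_{\text{PR}_j}$ has conditionals in $\{0, 1/2\}$ and contributes exactly $1$ bit. Hence $g_0 = \min W$ where $W$ ranges over feasible decompositions. The same reformulation applied to $P_{\text{guess}}(AB|xyE)$, using $\max_{ab}\sigma_{\text{LR}_i}(ab|xy)=1$ and $\max_{ab}\sigma_{\text{PR}_j}(ab|xy)=1/2$, yields $P_{\text{guess}}(AB|xyE) = 1 - \tfrac{1}{2}\min W$, independent of $xy$.

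Next, I would bracket $\min W$. For the lower bound $\min W \geq (\hat I - 2)/2$, observe that among the $24$ extreme points only the distinguished PR box $\sigma_{\text{PR}_1}$ achieves $\Exp(I_{\text{CHSH}}) = 4$, while all remaining extreme points (the other PR boxes and all LR distributions) satisfy $\Exp(I_{\text{CHSH}}) \leq 2$. Evaluating $\hat I = \Exp_\nu(I_{\text{CHSH}})$ on any convex decomposition yields $\hat I \leq 4\omega_{\text{PR}_1} + 2(1-\omega_{\text{PR}_1}) = 2 + 2\omega_{\text{PR}_1}$, so $W \geq \omega_{\text{PR}_1} \geq (\hat I - 2)/2$. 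For the matching upper bound I would directly quote Ref.~\cite{bierhorst:2016}, which constructs an explicit decomposition of any non-signaling $\nu$ with $\hat I > 2$ as $\omega_{\text{PR}_1}\sigma_{\text{PR}_1} + \sum_i \omega_{\text{LR}_i}\sigma_{\text{LR}_i}$ with $\omega_{\text{PR}_1} = (\hat I - 2)/2$ and no contribution from other PR boxes, giving $W = (\hat I - 2)/2$.

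Combining the two bounds, $\min W = (\hat I - 2)/2$, so $g_0 = (\hat I - 2)/2$ and $P_{\text{guess}}(AB|xyE) = (6-\hat I)/4$. Since the latter does not depend on the input $xy$, averaging over $\Prob(XY)$ preserves the value, so $P_{\text{guess}}(AB|XYE) = (6 - \hat I)/4$ and $H_{\min}(AB|XYE) = -\log_2((6-\hat I)/4)$. The main obstacle in this plan is the upper bound $\min W \leq (\hat I - 2)/2$, i.e., exhibiting a decomposition whose total PR weight saturates the CHSH-based lower bound while respecting the full non-signaling marginal constraints; this is precisely the nontrivial geometric fact proved in Ref.~\cite{bierhorst:2016}, and everything else is routine linear-algebra and entropy bookkeeping on the extreme points of the non-signaling polytope.
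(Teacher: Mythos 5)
Your proposal is correct and follows essentially the same route as the paper's own proof: the reduction of both $g_0$ and $P_{\mathrm{guess}}$ to a linear program over the $24$ extreme points of the non-signaling polytope, the CHSH-based lower bound $\omega_{\text{PR}_1}\geq(\hat I-2)/2$ using that only one PR box violates the chosen inequality, and the matching decomposition from Ref.~\cite{bierhorst:2016}. The only difference is cosmetic: the paper spells out that the optimal extremal decomposition is input-independent before averaging the guessing probability over $\Prob(XY)$, which your remark that the per-input value is independent of $xy$ covers implicitly.
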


\section{Entropy Accumulation} 
\label{subsec:eat}
Consider DIRG in the CHSH Bell-test configuration. In this section, the input
distribution $\Prob(XY)$ at each trial is assumed to be uniform. Define the winning
probability at a trial by $\hat{\omega}=1/2+\hat{I}/8$ where
$\hat{I}=\Exp_{\nu}(I_{\text{CHSH}})$ with $\nu$ the 
  distribution of trial results. Entropy accumulation~\cite{arnon-friedman:2018} 
  is a framework for estimating (quantum) conditional min-entropy 
  with respect to quantum side information and can be applied 
  to the CHSH Bell-test configuration. The following theorem from
  Ref.~\cite{arnon-friedman:2018} implements the framework: 
  
\begin{theorem} \label{thm:EAT} Let $(2+\sqrt{2})/4\geq\omega_{\mathrm{exp}},
  p_t\geq 3/4$, and $1\geq \kappa, \epsilon>0$.   Suppose that
  after $n$ trials the joint quantum state of the inputs $\Sfnt{XY}$,
  the outputs $\Sfnt{AB}$ and the quantum side information $E$ is
  $\rho$.  Define $\{\phi\}$ to be the event that the experimentally 
  observed winning probability is higher than or equal to $\omega_{\mathrm{exp}}$,
  and suppose that $\kappa\leq\Prob_{\rho}(\phi)$. 
  Denote the joint quantum state conditional on
  $\{\phi\}$ by $\rho_{|\phi}$. Then the (quantum) smooth conditional
  min-entropy evaluated at $\rho_{|\phi}$ satisfies
\begin{equation*}\label{eq:eat}
H_{\min}^{\epsilon}(\Sfnt{AB}|\Sfnt{XY}E)_{\rho_{|\phi}} > n \eta(p_t, \omega_{\mathrm{exp}}, n, \epsilon, \kappa), 
\end{equation*}
where $\eta$ is defined by 
\begin{align} \label{eq:eat_rate}
&g(p) =  \begin{cases}
			 1 - h\left( \frac{1}{2} + \frac{1}{2}\sqrt{16p(p-1)+3}  \right)&  p\in\left[3/4,(2+\sqrt{2})/4\right] \\
			1 & p \in\left[(2+\sqrt{2})/4,1\right]\;,
			\end{cases}\notag\\
	&f_{\min}\left(p_t,p\right) = \begin{cases}
	g\left(p\right)&  p \leq p_t \;  \\
	\frac{\mathrm{d}}{\mathrm{d}p} g(p)\big|_{p_t} p+ \Big( g(p_t) -\frac{\mathrm{d}}{\mathrm{d}p} g(p)\big|_{p_t} p_t\Big)& p> p_t\;,
	\end{cases} \nonumber \\
	& v(p_t,\epsilon, \kappa)=2\left( \log_2 13 + \frac{\mathrm{d}}{\mathrm{d}p} g(p)\big|_{p_t}\right)\sqrt{1-2 \log_2 (\epsilon\kappa)}\;, \notag \\ 
	&\eta(p_t,p,n,\epsilon, \kappa) =  f_{\min}\left(p_t, p\right) - \frac{1}{\sqrt{n}}v(p_t,\epsilon, \kappa)\;,\notag
\end{align}
with $h(x)=-x\log_{2}(x)-(1-x)\log_{2}(1-x)$ be the binary entropy function.
\end{theorem}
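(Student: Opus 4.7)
The plan is to derive the stated bound as an instantiation of the entropy accumulation theorem (EAT) of Ref.~\cite{arnon-friedman:2018}, specialized to the CHSH Bell test. The EAT provides a lower bound of the form $H_{\min}^{\epsilon}(\Sfnt{AB}|\Sfnt{XY}E)_{\rho_{|\phi}} \geq n f_{\min}(\omega_{\mathrm{exp}}) - \sqrt{n}\,\mathrm{(correction)}$ whenever one supplies an affine min-tradeoff function $f_{\min}$ that lower-bounds the single-round von Neumann entropy $H(A_iB_i|X_iY_iE_i)$ as a function of the probability $p$ assigned to winning the CHSH game. So the proof reduces to (i) checking that the setup satisfies the Markov-chain and channel structure required by EAT, (ii) producing an appropriate $f_{\min}$ from the CHSH entropy bound $g(p)$, and (iii) tracking the explicit constants in the second-order correction.

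First I would set up the EAT channels. Each trial $i$ is modeled as a channel $\mathcal{M}_i$ acting on the state shared between the untrusted devices and the purifying register $E$: it samples uniformly random $X_iY_i$, records the outputs $A_iB_i$, and outputs a statistic $D_i\in\{0,1,\perp\}$ indicating whether the trial was won, lost, or not tested (in the non-spot-checking variant used here, $D_i$ is simply $\mathbf{1}[\text{win}]$). Since the inputs are fresh uniform randomness independent of the past and of $E$, the Markov-chain condition $\Sfnt{A}_{<i}\Sfnt{B}_{<i} \leftrightarrow \Sfnt{X}_{<i}\Sfnt{Y}_{<i}E \leftrightarrow X_iY_i$ holds. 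The event $\phi$ is a function of the statistics $\Sfnt{D}$ requiring the empirical winning frequency to be at least $\omega_{\mathrm{exp}}$.

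Second I would construct the min-tradeoff function. The device-independent single-round bound $H(AB|XYE) \geq g(p)$ for a trial with winning probability $p\in[3/4,(2+\sqrt{2})/4]$ is the CHSH entropy curve derived from the tight single-system analysis (established via Jordan's lemma reduction to qubits and explicit optimization). Since $g$ is concave on its domain, for a calibration point $p_t\in[3/4,(2+\sqrt{2})/4]$ the tangent line at $p_t$ lies below $g$ on $[3/4,p_t]$ and extends to an affine function that remains a valid lower bound on the full accessible range; this is exactly the definition of $f_{\min}(p_t,p)$ in the theorem statement. This $f_{\min}$ is the required affine min-tradeoff function, with maximum slope $\frac{\mathrm{d}}{\mathrm{d}p}g(p)|_{p_t}$ and $\|f_{\min}\|_\infty \leq 1$.

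Third I would invoke the EAT (in its conditional-on-event form). The theorem yields
\begin{equation*}
H_{\min}^{\epsilon}(\Sfnt{AB}|\Sfnt{XY}E)_{\rho_{|\phi}} \geq n f_{\min}(p_t,\omega_{\mathrm{exp}}) - c\sqrt{n}\,\sqrt{1-2\log_2(\epsilon\kappa)},
\end{equation*}
with $c = 2\big(\log_2(2\cdot\vdim(AB)^2+1) + \|\nabla f_{\min}\|_\infty\big)$. Substituting $\vdim(AB)=2\cdot2=4$ so that $2\vdim(AB)^2+1 = 13$, and $\|\nabla f_{\min}\|_\infty = \frac{\mathrm{d}}{\mathrm{d}p}g(p)|_{p_t}$, produces exactly the claimed $v(p_t,\epsilon,\kappa)$. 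The $\sqrt{1-2\log_2(\epsilon\kappa)}$ factor combines the smoothing parameter $\epsilon$ with the success-probability lower bound $\kappa$ in the standard EAT way: conditioning on $\phi$ with $\Prob_\rho(\phi)\geq\kappa$ loses at most $\log(1/\kappa)$ bits of smooth min-entropy, which is absorbed into the correction term through the EAT's standard argument.

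The main obstacle is the bookkeeping in step three: the EAT as stated in Ref.~\cite{arnon-friedman:2018} is formulated for arbitrary affine min-tradeoff functions and event conditioning, but extracting precisely the constants $\log_2 13$, the factor of $2$, and the $\sqrt{1-2\log_2(\epsilon\kappa)}$ combination requires carefully re-deriving the second-order correction with the CHSH-specific dimensions and with $\kappa$ and $\epsilon$ playing symmetric roles rather than just $\epsilon$. The concavity and smoothness of $g$ on $[3/4,(2+\sqrt{2})/4]$, needed to ensure that $f_{\min}$ is a valid affine lower bound, is routine once $g$ is written in closed form.
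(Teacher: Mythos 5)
You are trying to prove a statement that the paper itself does not prove: Thm.~\ref{thm:EAT} is imported verbatim (in adapted notation) from Ref.~\cite{arnon-friedman:2018}, and the paper's ``proof'' is simply that citation. Your overall strategy --- model each trial as an EAT channel, check the Markov condition for the uniformly chosen inputs, supply a min-tradeoff function built from the CHSH entropy curve $g(p)$, and read off the second-order correction from the general entropy accumulation theorem --- is indeed the route taken in that reference, so at the level of architecture your sketch is the right one.

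However, two of your concrete steps are wrong as written. First, the tradeoff-function step has the convexity direction backwards: $g$ is \emph{convex} on $[3/4,(2+\sqrt{2})/4]$, and it is precisely this convexity that makes the construction valid, since the tangent at $p_t$ then lies \emph{below} $g$ on the whole interval; if $g$ were concave, as you assert, the tangent would lie above $g$ and $f_{\min}$ would fail to lower-bound the single-round entropy. Relatedly, $f_{\min}(p_t,\cdot)$ as defined in the statement is not affine --- it equals $g$ for $p\leq p_t$ and the tangent line only for $p> p_t$ --- so calling it ``the required affine min-tradeoff function'' misdescribes both the object and why it qualifies; what the EAT correction actually uses is the bound $\frac{\mathrm{d}}{\mathrm{d}p}g(p)\big|_{p_t}$ on its maximal slope. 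Second, the constant tracking does not reproduce the stated $v(p_t,\epsilon,\kappa)$: with $\vdim(AB)=4$ your expression $2\,\vdim(AB)^2+1$ equals $33$, not $13$, so the substitution you claim ``produces exactly the claimed $v$'' does not; the $\log_2 13$ in Ref.~\cite{arnon-friedman:2018} arises from a different dimension count in the general EAT correction term, and since the whole content of this theorem is the explicit finite-$n$ penalty (the first-order term being standard), asserting rather than deriving these constants leaves the statement unproved. If you intend to re-derive the theorem rather than cite it, you need the correct convexity argument for $f_{\min}$ and a faithful instantiation of the EAT second-order term, including how conditioning on $\{\phi\}$ with $\Prob_\rho(\phi)\geq\kappa$ enters through the factor $\sqrt{1-2\log_2(\epsilon\kappa)}$.
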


The function $f_{\min}$ in the theorem is referred to as a min-tradeoff function.
The parameter $p_t$ in the theorem is free, and can be optimized
over its range before running the protocol based on the chosen parameters $n$,
$\omega_{\textrm{exp}}$, $\epsilon$ and $\kappa$. So the optimal
entropy rate is $\eta_{\textrm{opt}}(\omega_{\textrm{exp}},
n,\epsilon, \kappa)=\max_{p_t}\eta(p_t,\omega_{\textrm{exp}},
n,\epsilon, \kappa)$.

According to Thm.~\ref{thm:EAT}, in order to certify $b$ bits of entropy given $\omega_{\mathrm{exp}}$, $\epsilon$ 
and $\kappa$, we need that $n \eta(p_t, \omega_{\mathrm{exp}}, n, \epsilon, \kappa)\geq b$. Equivalently, $n\geq n_{\mathrm{EAT},b}(p_t)$ where 
\begin{equation} \label{eq:EAT_trial_bound}
n_{\mathrm{EAT},b}(p_t)=\Bigg(\frac{ v(p_t,\epsilon, \kappa)+\sqrt{ v(p_t,\epsilon, \kappa)^2+4bf_{\min}\left(p_t,\omega_{\mathrm{exp}}\right)}}{2f_{\min}\left(p_t,\omega_{\mathrm{exp}}\right)}\Bigg)^2.
\end{equation}
Including the optimization over $p_{t}$ gives the minimum number of identical trials required:
\begin{equation} \label{eq:opt_EAT_trial_bound}
n_{\mathrm{EAT},b}=\min_{3/4\leq p_t\leq (2+\sqrt{2})/4}n_{\mathrm{EAT},b}(p_t). 
\end{equation}
To compute $n_{\textrm{EAT},b}$, we set the parameter $\omega_{\mathrm{exp}}$ to the 
winning probability $\hat{\omega}$ according to the distribution $\nu$
 of trial results in a stable experiment. 

We finish with several remarks on the comparison between entropy accumulation and probability estimation.  
First, Thm.~\ref{thm:EAT} based on entropy accumulation holds with respect to quantum 
side information, while Cor.~\ref{cor:smooth_min_entropy_bound} 
(Thm.~\ref{thm:smooth_min_entropy_bound} in the main text) %
based on probability estimation holds with respect to classical side information.
Second, in principle both entropy accumulation and
  probability estimation can witness asymptotically tight bounds on the smooth
conditional min-entropies with respect to the assumed side information.  
Entropy accumulation can witness the maximum asymptotic entropy rate with
respect to quantum side information, \emph{if} an optimal min-tradeoff
function is available.  However, it is unknown how to obtain such
 min-tradeoff functions. In particular, the min-tradeoff
function $f_{\min}\left(p,p_t\right)$ is not optimal for the CHSH Bell-test 
configuration considered here. A min-tradeoff function is required to be a
  bound on the single-trial conditional von Neumann entropy $H(AB|XYE)$.
  That $f_{\min}\left(p,p_{t}\right)$ is not optimal is due to the
  following: 1) $f_{\min}\left(p,p_t\right)$ is designed according
  to a bound on the single-trial conditional von Neumann entropy
  $H(A|XYE)$ derived in Refs.~\cite{acin:2007, pironio:2009}.  A tight
  bound on $H(A|XYE)$ is generally not a tight bound on $H(AB|XYE)$. 
  2) The bound on $H(A|XYE)$ derived in Refs.~\cite{acin:2007,
  pironio:2009} is tight \emph{if} the only information available is
the winning probability. However, in practice one can access the full
measurement statistics rather than just the winning
probability. In contrast to entropy accumulation,  probability estimation is an effective 
method for approaching the maximum asymptotic entropy rate (with respect to
classical side information) considering the full measurement
statistics and the model constraints. In general, the maximum rate 
with respect to quantum side information is lower than that with respect to
classical side information, as accessing quantum side information
corresponds to a more powerful attack.  Third and as demonstrated
  in the main text, probability estimation performs significantly better with finite data.

\end{document}